%% file: SSFM.tex
\documentclass[11pt]{article}
\input{_config}

\def\R{\mathbb{R}}
\def\OptSet{{\sf OptSet}}
\def\adj{\mathrm{adj}}
\usepackage{algorithm,algpseudocode,amssymb,amsmath}

\title{A Unified Approach to Minimizing \\
Symmetric Submodular Functions}

\author{Satoru Iwata
\thanks{Department of Mathematical Informatics, 
The University of Tokyo, Tokyo 113-8656, Japan, 
Institute for Chemical Reaction Design and Discovery, Hokkaido University, Sapporo, Hokkaido 001-0021, Japan.
{iwata@mist.i.u-tokyo.ac.jp}} \and
Haruto Konno \thanks{
Department of Mathematical Informatics, 
The University of Tokyo, Tokyo 113-8656, Japan. 
{fruitmix246@g.ecc.u-tokyo.ac.jp}}}

\begin{document}
\maketitle
\begin{abstract}
Symmetric submodular function minimization admits purely combinatorial algorithms using special orderings of the ground set.
Extending the minimum-cut algorithm of Nagamochi and Ibaraki (1992), Queyranne (1998) showed that the maximum adjacency ordering yields a pendent pair, which can be used to find a nontrivial minimizer.
Nagamochi (2010) later introduced the minimum degree ordering, which yields a flat pair and leads to the identification of extreme sets.
Despite the apparent similarity between these two algorithms, their connection remained unclear.  
  
In this paper, we introduce yet another ordering called minimum capacity ordering, and extend it to a one-parameter family of orderings, called $\alpha$-orderings, that unifies these two previously known orderings.
We prove a general inequality for $\alpha$-orderings, and 
our framework recovers the known pendent-pair and flat-pair results as special cases, corresponding to $\alpha = -1$ and $\alpha = 1$, respectively.
For each $\alpha \in [-1, 1]$, the last two elements of an $\alpha$-ordering form a contractible pair, i.e., a pair whose contraction preserves the existence of a nontrivial minimizer, which leads to a contraction algorithm that finds a nontrivial minimizer of a symmetric submodular function in $O(n^3)$ oracle calls, where $n$ is the cardinality of the ground set.
In addition, we discuss the ranges of $\alpha$ that ensure $\alpha$-ordering to obtain these special pairs. 
\end{abstract}

\section{Introduction}
Let $V$ be a finite set of cardinality $n$. We denote by $\mathbb{R}$ and $\mathbb{R}_+$ the sets of reals and nonnegative reals, respectively.
A set function $f: 2^V \to \mathbb{R}$ is called \emph{submodular} if 
\begin{equation*}
  f(X) + f(Y) \geq f(X \cup Y) + f(X \cap Y),
  \qquad \forall X, Y \subseteq V.
\end{equation*}
Submodularity is a fundamental notion in discrete optimization.
It has been studied through central structures such as
matroids, polyhedra, and convexity \cite{Edmonds1970Submodular,fujishige2005submodular,lovasz1983submodular}.
The general problem of minimizing a submodular function has a long history.
The first polynomial-time algorithm was obtained via the ellipsoid method by Grötschel, Lovász, and Schrijver~\cite{grotschel1981ellipsoid},
and combinatorial strongly polynomial algorithms were developed later by 
Iwata, Fleischer, and Fujishige~\cite{iwata2001combinatorial} and by Schrijver~\cite{schrijver2000combinatorial}, followed by subsequent improvements~\cite{dadush2021rescaling,iwata2003faster,iwataorlin2009simple,lee2015cuttingplane,orlin2009faster}. The current best bound on oracle calls in a strongly polynomial algorithm is $O(n^3\log\log n/\log n)$ due to Jiang~\cite{jiang2023convex}.

In this paper, we focus on the symmetric case.
A set function $f:2^V \to \mathbb{R}$ is called \emph{symmetric} if
\begin{equation*}
  f(X) = f(V \setminus X),
  \qquad \forall X \subseteq V.
\end{equation*}
Symmetric submodular functions form an important subclass of submodular functions.
Fujishige~\cite{fujishige1983canonical} studied their structural properties and established a decomposition theory for them. 
For a symmetric submodular function $f$, both $\varnothing$ and $V$ are always minimizers. Accordingly, the symmetric submodular function minimization problem asks for a nontrivial minimizer, i.e., a proper nonempty subset that achieves the minimum value. 

A canonical example of a symmetric submodular function is the cut capacity function of a capacitated undirected graph. For an undirected graph $G = (V, E)$ with an edge capacity function $c : E \to \mathbb{R}_+$, its cut capacity function
$\kappa: 2^V \to \mathbb{R}_+$ is defined by
\begin{equation*}
  \kappa(X) := \sum_{\{u, v\} \in E:\, |\{u, v\} \cap X| = 1} c(u, v)
  \qquad (\forall X \subseteq V),
\end{equation*}
which is symmetric and submodular.
The minimization of the cut function over nonempty proper subsets coincides with the global minimum-cut problem in an undirected graph.
A straightforward polynomial-time approach to this problem is to fix a vertex $s \in V$,
compute a minimum $(s,t)$-cut for each $t \in V \setminus \{s\}$ by a max-flow algorithm, and take the best one.

A key combinatorial approach to this problem originates in Nagamochi and Ibaraki~\cite{NagamochiIbaraki1992EdgeConnectivity},
whose correctness proof was later simplified by Frank~\cite{frank94edge-connectivity} and by Stoer and Wagner~\cite{StoerWagner1997SimpleMinCut}.
The method is based on constructing a \emph{maximum adjacency ordering} (MA-ordering) of the vertices and
using the fact that the last two vertices in such an ordering form a \emph{pendent pair}.
This implies that, among all cuts separating these two vertices, a minimum one is already represented by a singleton side,
so the pair can be contracted without losing any information needed to recover a global minimum cut.
Repeating this contraction step leads to one of the simplest and fastest deterministic global minimum-cut algorithms.

Queyranne~\cite{Queyranne1998} extended this ordering-based contraction framework from graph cut functions to arbitrary symmetric submodular functions,
thereby obtaining a combinatorial strongly polynomial algorithm that finds a nontrivial minimizer in $O(n^3)$ oracle calls.
For symmetric submodular function minimization, this approach remains substantially simpler and faster than using strongly polynomial algorithms for general submodular function minimization $O(n)$ times. 

More recently, Nagamochi~\cite{Nagamochi2010MinimumDegreeOrderings} introduced the \emph{minimum degree ordering} (MD-ordering) and proved that the last two vertices in such an ordering form a \emph{flat pair}.
In the graph setting, contracting flat pairs yields not only a minimum cut but also all \emph{extreme sets}
(extreme subsets, or LS-sets), which capture a finer structure of the connectivity function.
The notion of extreme sets predates this ordering-based framework:
it was introduced by Luccio and Sami~\cite{luccio1969decomposition}
as a compact representation of graph connectivity through decomposition into minimally interconnected subnetworks,
and was later applied by Lawler~\cite{lawler1973cutsets} to hypergraph partitioning.
Nagamochi~\cite{Nagamochi2010MinimumDegreeOrderings} further extended this viewpoint to symmetric submodular functions,
showing that the ordering-based contraction framework can recover not only a nontrivial minimizer
but also the finer structural information encoded by extreme sets with $O(n^3)$ oracle calls.

Exploiting the property of pendent pairs obtained by MA-ordering, Goemans and Soto~\cite{goemans2013hereditary} presented an algorithm that finds a  nontrivial minimizer of a symmetric submodular function in a hereditary family of subsets using $O(n^3)$ oracle calls. They further extended this algorithm to enumerate all minimal such optimal sets in the same asymptotic running time. Unlike symmetric submodular function minimization, the problem itself breaks symmetry, and the MD-ordering has not been applied to solve the same problem. In contrast, Goemans and Soto~\cite{goemans2013hereditary} also described an algorithm that enumerates all extreme sets of a symmetric submodular function using $O(n^4)$ oracle calls, whereas Nagamochi's algorithm achieves the same goal more efficiently. Thus, MA- and MD-orderings have their own advantages. 

From a higher perspective, however, the similarity between the algorithms of Queyranne and Nagamochi suggests that the MA- and MD-orderings should be viewed not as isolated constructions, but as manifestations of a broader common principle. The purpose of this paper is to clarify their common structure. 

To capture the algorithmic content of the pairs of elements obtained at the end of these orderings, we introduce the notion of a \emph{contractible pair}.
This is a weaker property than being a pendent pair or a flat pair, but it is still sufficient to support a contraction algorithm for finding a nontrivial minimizer. We also introduce yet another ordering called the \emph{minimum capacity ordering} (MC-ordering). Our first main result (Theorem~\ref{th:MC-ordering}) shows that the last two elements of an MC-ordering form a contractible pair, thereby obtaining the third ordering-based combinatorial algorithm for finding a nontrivial minimizer of a symmetric submodular function with $O(n^3)$ oracle calls. 

We then extend the MC-ordering to a one-parameter family of orderings, called \emph{$\alpha$-orderings}, which interpolates between the two previously known orderings: the case $\alpha=-1$ coincides with the MA-ordering, 
while the case $\alpha=1$ corresponds to the MD-ordering. 
Our second main result is a general inequality for $\alpha$-orderings (Theorem~\ref{thm:alpha-order-cut}). From this inequality, the pendent-pair theorem of Queyranne and the flat-pair theorem of Nagamochi are both recovered as special cases. In addition, for each $\alpha \in [-1,1]$, the last two elements of an $\alpha$-ordering form a contractible pair (Corollary~\ref{cor:find-contractible}). As a consequence, repeated contraction yields a nontrivial minimizer of a symmetric submodular function with $O(n^3)$ oracle calls. 

We also present complementary converse results showing that this range of parameters is tight in a natural sense: for $\alpha \notin [-1,1]$, the last two elements of an $\alpha$-ordering need not be contractible. 
Moreover, for $\alpha \neq -1$ and $\alpha \neq 1$, they need not be a pendent pair or a flat pair, respectively. 

The outline of this paper is as follows. Section \ref{sec:prel} summarizes the two previously known algorithms for symmetric submodular function minimization. In Section \ref{sec:CP}, we introduce the concept of contractible pairs, and then present how to use those pairs for minimizing symmetric submodular functions. In Section~\ref{sec:MC-ordering}, we introduce an MC-ordering, and show that the last two elements form a contractible pair. 
Section~\ref{sec:alpha-ordering} extends MC-ordering to $\alpha$-ordering that unifies the MA- and MD-ordering as its special cases, and shows that $\alpha$-ordering for any $\alpha\in[-1,1]$ yields a contractible pair. Finally, in Section~\ref{sec:cond}, we examine the ranges of $\alpha$ that ensure $\alpha$-ordering to obtain contractible, pendent, and flat pairs.  

\section{Maximum Adjacency Ordering and Minimum Degree Ordering}
\label{sec:prel}
In this section, we briefly summarize the two previously known ordering-based algorithms for symmetric submodular function minimization. 
Given an ordering $(v_1,\ldots,v_n)$ of $V$, we denote $V_0 := \varnothing$ and $V_i := \{v_1, \dots, v_i\}$ for $i = 1, \dots, n$.

Consider an undirected graph $G=(V,E)$ with an edge capacity function $c:E\to\R_+$. For a subset $W\subsetneq V$ and a vertex $y\in V\setminus W$, let $\adj(W,y)$ denote the adjacency between $W$ and $y$, i.e., the sum of $c(x,y)$ for all edges $\{x,y\}\in E$ with $x\in W$. Note that $\adj(W,y)$ can be expressed in terms of the cut capacity function $\kappa$ by 
$$\adj(W,y)=\frac{1}{2}\left[\kappa(\{y\})+\kappa(W)-\kappa(W\cup\{y\})\right].$$
Starting with an arbitrary vertex $v_1\in V$, construct an ordering $(v_1,\ldots,v_n)$ of $V$ in such a way that $v_i$ attains the maximum value of $\adj(V_{i-1},y)$ among all $y\in V\setminus V_{i-1}$ for $i=2,\ldots,n$. The key observation of Nagamochi and Ibaraki~\cite{NagamochiIbaraki1992EdgeConnectivity} is that the last two vertices $v_{n-1}$ and $v_n$ of this maximum adjacency ordering form a pendent pair, i.e., $\kappa(X)\geq \kappa(\{v_n\})$ holds for any $X\subsetneq V$ with $|X\cap\{v_{n-1},v_n\}|=1$. This allows us to contract the last two vertices into one without losing any candidates of the minimum-cut value other than $\kappa(\{v_n\})$. 

Stimulated by simple correctness proofs of the Nagamochi--Ibaraki algorithm given by Frank~\cite{frank94edge-connectivity} and by Stoer and Wagner~\cite{StoerWagner1997SimpleMinCut}, Queyranne~\cite{Queyranne1998} extended the concepts of maximum adjacency ordering and pendent pair to the general setting of symmetric submodular functions, and presented the first combinatorial algorithm for minimizing them. 

\begin{definition}[Maximum Adjacency Ordering]
  An ordering $(v_1, \dots, v_n)$ of $V$ is called a maximum adjacency ordering (MA-ordering) with respect to a set function $f$ on $V$ if
  \begin{equation*}\label{eq:ma-order}
    f(\{v_i\}) - f(V_{i - 1} \cup \{v_i\}) 
    \geq f(\{v_j\}) - f(V_{i - 1} \cup \{v_j\})
  \end{equation*}
  holds for every pair $(i,j)$ such that $1 \leq i \leq j \leq n$. 
\end{definition}

\begin{definition}[Pendent Pair]
  A pair $(u, v)$ of distinct elements of $V$ is called a pendent pair of a set function $f$ on $V$ if 
  \begin{equation*}\label{eq:pendent}
    f(X) \geq f(\{v\})
  \end{equation*}
  holds for every subset $X \subsetneq V$ with $|X \cap \{u, v\}| = 1$. 
\end{definition}

\begin{theorem}[Queyranne~\cite{Queyranne1998}]\label{thm:ma-order-cut}
  Let $f: 2^V \to \mathbb{R}$ be a symmetric submodular function
  and $(v_1, \dots, v_n)$ be an MA-ordering of $V$ with respect to $f$.
  Then $(v_{n - 1}, v_n)$ is a pendent pair.
\end{theorem}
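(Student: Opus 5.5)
Following Queyranne's argument, in the spirit of Frank and of Stoer--Wagner, I would prove a stronger statement by induction along the ordering. For a set $X \subseteq V$ and an index $i$, say that $v_i$ is \emph{separated} by $X$ if $v_{i-1}$ and $v_i$ lie on different sides of $X$, i.e., $|X \cap \{v_{i-1},v_i\}|=1$. Define the ``adjacency-type'' quantity
\[
g_i(y) := f(V_{i-1}\cup\{y\}) - f(\{y\}), \qquad y \notin V_{i-1},
\]
so that MA-ordering means $g_i(v_i) \le g_i(v_j)$ for $j\ge i$. The goal is a Stoer--Wagner-type inequality: for every $X$ and every $i$ such that $v_i$ is separated by $X$, one has
\[
f(V_{i-1}\cup\{v_i\}) - f(\{v_i\}) + f(X_i) \le f(V_{i-1}) + \ldots ,
\]
or, in the cleaner form I would aim for, a function-value version of ``$\adj(V_{i-1},v_i)\le \kappa_{G[V_i]}(X\cap V_i)$.'' Applied at $i=n$, where $V_{n-1}\cup\{v_n\}=V$, this should collapse to $f(\{v_n\}) \le f(X)$.

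Concretely, following Queyranne, I would show by induction on the separated indices that for every $X\subsetneq V$ and every $i$ such that $v_i$ is separated by $X$,
\begin{equation*}
f(X\cap V_i) + f(\{v_i\}) \ge f(V_{i}) + f(V_{i-1}) - f(V_{i-1}) \quad\text{(suitably arranged)},
\end{equation*}
and I would fix the exact form by mimicking the graph identity $\adj(W,y)=\tfrac12[\kappa(\{y\})+\kappa(W)-\kappa(W\cup\{y\})]$. The concrete target is
\[
f(V_{i-1}) + f(\{v_i\}) - f(V_i) \;\le\; f(X\cap V_i) + f(\{v_i\}) - f((X\cap V_i)\triangle\{v_i\}) \ \text{-type bounds},
\]
that is, ``half of the left side is at most $f(X\cap V_i)$'' restricted to $V_i$. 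The cleanest path is to prove $f(V_{i-1})+f(\{v_i\})-f(V_i)\le f(X\cap V_i)+f(V_i\setminus X)-f(V_i)$, which for $i=n$, using $f(V)=f(\varnothing)$ and symmetry, gives $f(V_{n-1})+f(\{v_n\}) \le 2f(X)$. Then symmetry, $f(V_{n-1})=f(\{v_n\})$, yields $f(\{v_n\})\le f(X)$.

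For the inductive step, let $j<i$ be the previous separated index; if none exists, the base case follows from submodularity directly. Between $j$ and $i$ all of $v_j,\dots,v_{i-1}$ lie on one side of $X$, while $v_i$ lies on the other. I would combine three ingredients: the induction hypothesis at $j$; the MA inequality at step $j$ comparing $v_j$ with $v_i$, namely $g_j(v_j)\le g_j(v_i)$; and a submodularity inequality that transports the difference $f(V_{j-1}\cup\{v_i\})-f(\{v_i\})$ up to $f(V_{i-1}\cup\{v_i\})-f(\{v_i\})$ while swapping sides. Symmetry is used to replace $X$ by its complement so that $v_i$ always lies in the relevant set.

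The main obstacle is exactly this bookkeeping: choosing the precise form of the invariant so that one application of submodularity, applied to a pair like $V_{i-1}$ and $(X\cap V_i)\cup V_{j-1}$ or its complement within $V_i$, closes the induction with no leftover terms. I would first try the invariant in the form $f(V_{i-1})+f(\{v_i\})-f(V_i)\le f(X\cap V_i)+f(V_i\setminus X)-f(V_i)$ and check it against the cut function, where it reads $2\,\adj(V_{i-1},v_i)\le 2\kappa_{G[V_i]}(X\cap V_i)$, which is precisely the Stoer--Wagner lemma. If the telescoping fails to close, I would weaken the invariant by adding terms $f(V_{j})$ and absorb them using symmetry at the final step.
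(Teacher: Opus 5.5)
Your overall strategy works and is genuinely different from the paper's, but the proposal stops where the real work begins.

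\textbf{What is right.} The invariant you eventually settle on is the correct one. After cancelling $f(V_i)$ it reads
\[
f(V_{i-1})+f(\{v_i\})\le f(X\cap V_i)+f(V_i\setminus X)
\]
for every $X$ and every $i\ge 2$ with $|X\cap\{v_{i-1},v_i\}|=1$. Three features make it convenient:
\begin{itemize}
\item it is unchanged under $X\mapsto V\setminus X$ for any $f$, so symmetry of $f$ is needed only at $i=n$;
\item its base case is an identity, not a consequence of submodularity, because there $\{X\cap V_i,V_i\setminus X\}=\{V_{i-1},\{v_i\}\}$;
\item at $i=n$ it gives the theorem exactly as you say.
\end{itemize}
The earlier displayed ``targets'' are garbled and should be dropped.

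\textbf{The gap.} The inductive step is left as unverified bookkeeping, and the mechanism you sketch for it goes the wrong way. By submodularity, $f(A)+f(\{v_i\})-f(A\cup\{v_i\})$ is nondecreasing in $A\subseteq V\setminus\{v_i\}$. So transporting from $V_{j-1}$ up to $V_{i-1}$ only bounds the left-hand side of the invariant from below, whereas you need an upper bound. The additivity $\mathrm{adj}(V_{i-1},v_i)=\mathrm{adj}(V_{j-1},v_i)+\mathrm{adj}(R,v_i)$, with $R:=\{v_j,\dots,v_{i-1}\}$, is what drives Stoer--Wagner, and it has no submodular counterpart in that direction.

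\textbf{How the step closes.} It uses two submodular inequalities, neither of which involves $V_i$. Assume $v_i\in X$, so $v_j,\dots,v_{i-1}\notin X$. Put $P:=X\cap V_{j-1}$, $Q:=V_{j-1}\setminus X$, $u:=v_j$, $v:=v_i$. Then $X\cap V_j=P$, $V_j\setminus X=Q\cup\{u\}$, $X\cap V_i=P\cup\{v\}$ and $V_i\setminus X=Q\cup R$. The four ingredients are
\begin{align*}
f(P\cup Q)+f(\{u\}) &\le f(P)+f(Q\cup\{u\}) && \text{(hypothesis at $j$)},\\
f(\{v\})-f(P\cup Q\cup\{v\}) &\le f(\{u\})-f(P\cup Q\cup\{u\}) && \text{(MA at step $j$)},\\
f(P\cup Q\cup\{v\})+f(P) &\le f(P\cup\{v\})+f(P\cup Q) && \text{(submodularity)},\\
f(P\cup Q\cup R)+f(Q\cup\{u\}) &\le f(Q\cup R)+f(P\cup Q\cup\{u\}) && \text{(submodularity, $u\in R$)}.
\end{align*}
Adding them, everything cancels except $f(P\cup Q\cup R)+f(\{v\})\le f(P\cup\{v\})+f(Q\cup R)$, which is the invariant at $i$. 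The last inequality is what absorbs the block $R$ on the right-hand side, replacing the graph additivity.

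\textbf{Comparison with the paper.} Once completed, your argument is a direct submodular version of the Stoer--Wagner active-vertex lemma, and it gives the inequality at every separated index. The paper proves only the $i=n$ statement, and only for MC-orderings. It does so by induction on $|V|$, passing to $f'(Y)=f(Y\cup\{v_1\})-f(\{v_1\})$. It then transfers the result to $\alpha$-orderings through the modular shift $g(S)=f(S)+\alpha\sum_{v\in S}f(\{v\})$. At $\alpha=-1$ one has $g(\{x\})=0$, and Lemma~\ref{lem:MC-ordering} becomes exactly $f(X)+f(V\setminus X)\ge f(V_{n-1})+f(\{v_n\})$, which is your invariant at $i=n$.

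Your route is tied to the MA case. For symmetric $f$, its $i=n$ instance is the pendent-pair inequality, which by Proposition~\ref{prop:requirement-for-pendent} fails for every $\alpha\neq-1$. The $\min_{x\in X}$ term in the paper's lemma is what lets one argument cover MA-, MC- and MD-orderings at once.
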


Somewhat later, Nagamochi~\cite{Nagamochi2010MinimumDegreeOrderings}
designed another ordering-based algorithm for finding all the extreme sets in a capacitated undirected graph. For each vertex $y\in V$, the degree of $y$ means the sum of all the capacities of edges incident to $y$. The algorithm repeatedly selects a vertex that attains the minimum degree and removes it from the graph. Let $\deg_W(y)$ denote the degree of $y$ in the graph obtained by removing $W$ from $G$. Note that $\deg_W(y)$ can be expressed in terms of $\kappa$ by 
$$\deg_W(y)=\frac{1}{2}[\kappa(\{y\})-\kappa(W)+\kappa(W\cup\{y\})].$$
The two vertices $u$ and $v$ remained at the end of this procedure are shown to form a flat pair, i.e., any subset $X\subsetneq V$ with $|X\cap \{u,v\}|=1$ satisfies $\kappa(X)\geq\min\{\kappa(\{x\})\mid x\in X\}$. Then the algorithm contracts $u$ and $v$ into one. Repeating this process $O(n)$ times enables us to enumerate all the extreme sets. Nagamochi~\cite{Nagamochi2010MinimumDegreeOrderings} also extended this algorithm to the setting of symmetric submodular functions as follows. 

\begin{definition}[Minimum Degree Ordering]
  An ordering $(v_1, \dots, v_n)$ of $V$ is called a minimum degree ordering (MD-ordering) with respect to a set function $f$ on $V$ if
  \begin{equation*}\label{eq:md-order}
    f(V_{i - 1} \cup \{v_i\}) + f(\{v_i\}) 
    \leq f(V_{i - 1} \cup \{v_j\}) + f(\{v_j\})
  \end{equation*}
  holds for every pair $(i,j)$ such that $1 \leq i \leq j \leq n$.
\end{definition}

\begin{definition}[Flat Pair]
  A pair $(u, v)$ of distinct elements of $V$ is called a flat pair with respect to a set function $f$ on $V$ if 
  \begin{equation*}\label{eq:flat}
    f(X) \geq \min_{x \in X} f(\{x\})
  \end{equation*}
  holds for every subset $X \subsetneq V$ with $|X \cap \{u, v\}| = 1$.
\end{definition}

\begin{theorem}[Nagamochi~\cite{Nagamochi2010MinimumDegreeOrderings}]\label{thm:md-order-cut}
  Let $f: 2^V \to \mathbb{R}$ be a symmetric submodular function
  and $(v_1, \dots, v_n)$ be an MD-ordering of $V$ with respect to $f$.
  Then $(v_{n - 1}, v_n)$ is a flat pair.
\end{theorem}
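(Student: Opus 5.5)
The plan is to prove a stronger statement by induction along the ordering, in the spirit of the Frank / Stoer--Wagner / Queyranne proof of Theorem~\ref{thm:ma-order-cut}, but with the MD-inequality in place of the MA-inequality. Write $d_i(y) := f(V_{i-1}\cup\{y\}) + f(\{y\})$ for $y\notin V_{i-1}$. This is the submodular analogue of twice the degree of $y$ after deleting $V_{i-1}$, up to the term $f(V_{i-1})$, which does not depend on $y$. The MD-condition says exactly that $v_i$ minimizes $d_i$ over $V\setminus V_{i-1}$.

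The invariant I would aim for is the following. For every $k\ge 2$ and every $X\subseteq V$ with $|X\cap\{v_{k-1},v_k\}|=1$, we have $f(X) \ge \min\bigl\{ f(\{x\}) : x\in X\bigr\} + \bigl[\text{correction term depending only on } X\cap V_k\bigr]$. The correction term should vanish when $k=n$, since then $X\cap V_n = X$ and we recover the flat-pair inequality. Concretely, I would first try the form
\[
f(X\cap V_k) + f(\{v_k\}) \;\ge\; \min_{x\in X\cap V_k} f(\{x\}) + f(V_{k-1}\cup\{v_k\})\quad(\text{or with }X\text{ replaced by }V_k\setminus X),
\]
and then push it from $V_k$ up to all of $V$. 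To handle the two possibilities $v_{k-1}\in X$ or $v_k\in X$, I would use symmetry $f(X)=f(V\setminus X)$ only inside $V_k$, via the restricted function $Y\mapsto f(Y)$ on $V_k$. I would not use it globally, because passing to the complement changes the minimum on the right-hand side.

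The inductive step would go as follows. Let $j<k$ be the largest index such that $v_j$ and $v_k$ lie on different sides of $X$. Apply the induction hypothesis to the pair $(v_{j}, v_{j+1})$, where the sides switch. Then combine it with two further inequalities. The first is submodularity applied to $X\cap V_k$ and $V_{j}\cup\{v_k\}$, or to the complementary pair. The second is the MD-inequality $d_{j+1}(v_{j+1}) \le d_{j+1}(v_k)$. I would add these inequalities and cancel the common terms. The minimum $\min_{x} f(\{x\})$ propagates because the new term $f(\{v_k\})$ appears with the right sign: if the minimum is attained at $v_k$ we are done trivially, and otherwise the minimizer lies in $X\cap V_j$ and the induction hypothesis supplies it.

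The step I expect to be the main obstacle is getting the exact form of the invariant right, so that the $\min$ term survives the addition of inequalities. With pendent pairs the right-hand side is the single value $f(\{v_n\})$, which telescopes cleanly. Here, by contrast, a minimum appears, and the two cases $v_{k-1}\in X$ and $v_k\in X$ behave asymmetrically. If the naive invariant above fails, I would try a convex combination instead: prove $2f(X)\ge f(\{v_{k-1}\})+f(\{v_k\})$-type averaged inequalities restricted to $V_k$. Such averages bound the minimum from above, and they interact linearly with the MD-inequality. I would check the base case $k=2$ directly from symmetry and submodularity.
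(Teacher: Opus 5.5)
There is a genuine gap. The whole proof rests on the invariant, which you leave open, and both candidates you actually write down are false, so the inductive step has nothing to propagate.

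\textbf{Your first candidate fails at the base case.} At $k=2$ it reduces, in every orientation, to $f(\{v_2\})\ge f(\{v_1,v_2\})$. This fails whenever $v_1,v_2$ are non-adjacent vertices of positive degree. Take the cut function of the graph with unit-capacity edges $\{p_1,p_4\},\{p_4,p_3\},\{p_3,p_2\}$, which is the graph in the proof of Proposition~\ref{prop:requirement-for-pendent} with $\alpha=1$. Here $(p_1,p_2,p_3,p_4)$ is an MD-ordering, yet $f(\{p_2\})=1<2=f(\{p_1,p_2\})$.

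\textbf{It also gives the wrong conclusion at $k=n$.} There it only yields $f(X)\ge\min_{x\in X}f(\{x\})+f(\varnothing)-f(\{v_n\})$. Since $f(\{v_n\})\ge f(\varnothing)$ for symmetric submodular $f$, the correction term is nonpositive and in general strictly negative, so it does not vanish as you require.

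\textbf{Your fallback is false too.} The inequality $2f(X)\ge f(\{v_{n-1}\})+f(\{v_n\})$ implies the pendent-type bound $f(X)\ge\min\{f(\{v_{n-1}\}),f(\{v_n\})\}$. In the same example, $X=\{p_1,p_4\}$ gives $2f(X)=2<4=f(\{p_3\})+f(\{p_4\})$.

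\textbf{Two smaller points.} First, $f$ restricted to $2^{V_k}$ is not symmetric with respect to $V_k$, so there is no symmetry to use ``inside $V_k$''. Second, when $v_k\in X$, your dichotomy ``minimizer at $v_k$ or in $X\cap V_j$'' omits $v_{j+1},\dots,v_{k-1}$. The case ``at $v_k$'' is not trivial either: with the correct invariant below, it amounts to the MA-type inequality $f(X\cap V_k)+f(V_k\setminus X)\ge f(\{v_k\})+f(V_{k-1})$.

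What is true, for $k\ge 2$ and $|X\cap\{v_{k-1},v_k\}|=1$, is the two-sided inequality
\[
f(X\cap V_k)+f(V_k\setminus X)+f(\{v_k\})\ \ge\ 2\min_{x\in X\cap V_k}f(\{x\})+f(V_{k-1}).
\]
At $k=n$, symmetry ($f(V\setminus X)=f(X)$ and $f(V_{n-1})=f(\{v_n\})$) turns this into flatness.

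The paper reaches this without induction along the ordering. By Lemma~\ref{lem:0-to-alpha} with $\alpha=1$, an MD-ordering of $f$ is an MC-ordering of $g(S)=f(S)+\sum_{v\in S}f(\{v\})$. Lemma~\ref{lem:MC-ordering} then proves $g(X)+g(V\setminus X)\ge\min_{x\in X}g(\{x\})+g(V_{n-1})$ for any submodular $g$. The proof is by induction on $|V|$, removing the \emph{first} element $v_1$ via $g'(Y)=g(Y\cup\{v_1\})-g(\{v_1\})$:
\begin{itemize}
\item If $v_1\notin X$, the induction hypothesis plus $g(X\cup\{v_1\})-g(X)\le g(\{v_1,x^*\})-g(\{x^*\})$ closes the step, where $x^*$ minimizes $g'(\{x\})$ over $X$.
\item If $v_1\in X$, apply the first case to $V\setminus X$ and use that $v_1$ minimizes $g(\{v\})$.
\end{itemize}
This front-end induction is what makes the $\min$ term harmless. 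The substitution $f\mapsto g$ and this lemma are the ideas your proposal is missing.
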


Besides Queyranne's original argument, alternative expositions and proofs for \autoref{thm:ma-order-cut} are known in the literature \cite{fujishige1998another,rizzi2000symmetric}.
Our approach below is different in nature: we do not treat the pendent-pair and flat-pair results separately. Instead, we derive both of them from a single inequality for a parametrized family of orderings.

\section{Contractible Pair}
\label{sec:CP}
In this section, we introduce the concept of contractible pairs for set functions $f$ on $V$, and we present an algorithm that finds a nontrivial minimizer of a symmetric submodular function $f$ by iteratively contracting such pairs. 

\begin{definition}[Contractible Pair]
  A pair $(u,v)$ of distinct elements in $V$ is called a contractible pair with respect to a set function $f$ on $V$ if 
  \begin{equation*}\label{eq:contractible}
    f(X) \geq \min_{x \in V} f(\{x\})
  \end{equation*}  
  holds for every subset $X \subsetneq V$ with $|X \cap \{u, v\}| = 1$. 
\end{definition}

We present a recursive algorithm using contractible pairs to find a nontrivial minimizer of $f$. It first finds a contractible pair $(u,v)$. 
Let $f'$ be the set function on $V':=V\setminus\{u,v\}\cup\{w\}$ defined by 
\begin{equation*}
  f'(X) :=
  \begin{cases}
    f(X)             & (w \notin X \subseteq V'), \\
    f(V'\setminus X) & (w \in X \subseteq V').
  \end{cases}
\end{equation*}
Obviously, $f'$ is a symmetric submodular function on $V'$. Let $Y\subsetneq V'\setminus\{w\}$ be a nontrivial minimizer of $f'$. Since $(u,v)$ is a contractible pair, $\{u\}$, $\{v\}$, or $Y$ attains the minimum value of $f$ among all proper nonempty sets of $V$. This suggests a recursive algorithm for symmetric submodular function minimization, formally described as follows.  
\begin{algorithm}[H]
\caption{$\OptSet(V,f)$}
\label{alg:OptSet}
\begin{algorithmic}[1]
      \State Find a contractible pair $(u,v)$ of $f$.       
      \If{$V=\{u,v\}$} $Y\gets \{u\}$ 
      \Else 
      \State $V' \gets (V \setminus \{u, v\}\bigr) \cup \{w\}$ 
      \State Let $f': 2^{V'} \to \mathbb{R}$ be the function  obtained from $f$ by contracting $u$ and $v$ into $w$. 
      \State $Y\gets \OptSet(V',f')$ 
      \If{$w\in Y$} $Y\gets V'\setminus Y$ \EndIf 
      \If{$f(\{u\})<f(Y)$} $Y\gets\{u\}$ \EndIf 
      \If{$f(\{v\})<f(Y)$} $Y\gets\{v\}$ \EndIf 
    \EndIf 
    \State \Return $Y$.
\end{algorithmic}
\end{algorithm}

\begin{theorem}\label{thm:optimal-set-contractibility}
  Let $f: 2^V \to \R$ be a symmetric submodular function.
  Then the algorithm $\OptSet(V,f)$ returns a nontrivial minimizer of $f$ with $O(n^3)$ oracle calls.
\end{theorem}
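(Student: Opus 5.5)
The proof goes by induction on $n=|V|$, with correctness and complexity handled separately. The complexity bound depends on how Step~1 is carried out. At this point in the paper no procedure for finding a contractible pair has been fixed, so I will implicitly assume what the later sections provide. Since an MA-ordering yields a pendent pair by Theorem~\ref{thm:ma-order-cut}, and a pendent pair is clearly contractible ($f(X)\ge f(\{v\})\ge\min_{x\in V} f(\{x\})$), such a pair can be found with $O(n^2)$ oracle calls. An ordering is built greedily: at step $i$ we evaluate $f(V_{i-1}\cup\{y\})$ and $f(\{y\})$ for every remaining $y$, for a total of $O(n^2)$ calls. The recursion has depth $n-1$, and each level also makes $O(1)$ extra calls to compare $f(\{u\})$, $f(\{v\})$ and $f(Y)$. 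Each oracle call to $f'$ is a single call to $f$ after the contracted elements are expanded, which is true at every level of nesting. Summing $\sum_{k\le n} O(k^2)$ gives $O(n^3)$.

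For correctness, I would first verify that $f'$ is symmetric and submodular, so that the induction hypothesis applies to $\OptSet(V',f')$. The cleanest view is $f'(X)=f(\pi(X))$, where $\pi$ replaces $w$ by $\{u,v\}$. This composition with a lattice embedding preserves both submodularity and symmetry. Note that the case split in the definition of $f'$ agrees with this view because $f$ is symmetric: if $w\in X$, then $f(V'\setminus X)=f(V\setminus\pi(X))=f(\pi(X))$.

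The base case is $V=\{u,v\}$. The only nontrivial sets are $\{u\}$ and $\{v\}$, and they have equal value by symmetry, so $\{u\}$ is a minimizer.

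For the inductive step, let $Y'$ be the output of the recursive call. By induction, $Y'$ is a nontrivial minimizer of $f'$. After the complement step, $Y\subseteq V'\setminus\{w\}$ satisfies $f'(Y)=f'(Y')$, which is still minimal and still nontrivial, since $Y'\ne V'$ implies $Y\neq\varnothing$. Then $Y\subseteq V\setminus\{u,v\}$, $Y\ne\varnothing$, and $f(Y)=f'(Y)$.

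Now take any nontrivial $X\subsetneq V$. There are two cases.
\begin{itemize}
\item If $|X\cap\{u,v\}|\in\{0,2\}$, then $X$ or $V\setminus X$ avoids $\{u,v\}$. By symmetry we may assume $X$ does, so $X=\pi(X)$ with $X$ nontrivial in $V'$. Hence $f(X)=f'(X)\ge f'(Y)=f(Y)$.
\item If $|X\cap\{u,v\}|=1$, contractibility gives $f(X)\ge\min_{x\in V}f(\{x\})$.
\end{itemize}
Every singleton $\{x\}$ with $x\ne u,v$ falls under the first case, so $f(\{x\})\ge f(Y)$. Therefore $\min_{x\in V}f(\{x\})\ge\min\{f(Y),f(\{u\}),f(\{v\})\}$. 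The final two comparisons in the algorithm return exactly this minimum, so the returned set is a nontrivial minimizer.

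The main subtlety is this last point: contractibility only bounds $f(X)$ by the minimum over \emph{all} singletons, not just $\{u\}$ and $\{v\}$. The other singletons have to be handled through the recursive minimizer, as in the first case above. The other point needing care is the cost accounting: it relies on the stated $O(n^2)$-call routine for finding a contractible pair, and on the fact that nested contractions never inflate the cost of an oracle call.
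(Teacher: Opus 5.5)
Your proof is correct and is essentially the paper's own argument, which the paper gives only informally in the paragraph just before Algorithm~\ref{alg:OptSet}: $f'$ is symmetric submodular, the recursive minimizer $Y$ (after complementation) dominates every nontrivial $X$ that does not separate $u$ and $v$, including all singletons other than $\{u\},\{v\}$, and contractibility handles the separating sets, so one of $Y$, $\{u\}$, $\{v\}$ is optimal. For Step~1 the paper has in mind any $\alpha$-ordering with $\alpha\in[-1,1]$ (Corollary~\ref{cor:find-contractible}), and your MA-ordering is the case $\alpha=-1$; each such ordering costs $O(n^2)$ oracle calls, so the $O(n^3)$ bound comes out the same way.
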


\section{Minimum Capacity Ordering}
\label{sec:MC-ordering}
In this section, we introduce yet another ordering called minimum capacity ordering (MC-ordering), which finds a contractible pair. 

\begin{definition}[Minimum Capacity Ordering]
  An ordering $(v_1, \dots, v_n)$ of $V$ is called a minimum capacity ordering (MC-ordering) with respect to a set function $f:2^V\to\R$ if
  \begin{equation*}\label{def:alpha-order}
    f(V_{i - 1} \cup \{v_i\}) 
    \leq f(V_{i - 1} \cup \{v_j\})
  \end{equation*}
  holds for every pair $(i,j)$ such that $1 \leq i \leq j \leq n$. 
\end{definition}

Given $V_{i - 1}$, one can choose the next element $v_i$ from $V \setminus V_{i - 1}$ by evaluating $f(V_{i - 1} \cup \{u\})$ for all $u \in V \setminus V_{i - 1}$. Thus, an MC-ordering can be obtained with $O(n^2)$ oracle calls.

\begin{lemma}\label{lem:MC-ordering}
  Let $f: 2^V \to \mathbb{R}$ be a submodular set function on a finite set $V$ of cardinality $n$ and $(v_1, \dots, v_n)$ be an MC-ordering of $V$. 
  For an  arbitrary subset $X \subsetneq V$ such that $|X \cap \{v_{n - 1}, v_n\}| = 1$, we have 
  \begin{equation}\label{eq:cut-0-order}
    f(X) + f(V \setminus X) \geq \min_{x \in X} f(\{x\}) + f(V_{n - 1}).
  \end{equation}
\end{lemma}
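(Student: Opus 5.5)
The plan is to prove a stronger statement by induction along the ordering, in the style of the Frank / Stoer--Wagner proofs of the pendent-pair theorem. Fix $X$ and write $A_i := X \cap V_i$, $B_i := V_i \setminus X$, and $m_i := \min_{x \in A_i} f(\{x\})$ whenever $A_i \neq \varnothing$. Call an index $i \ge 2$ a \emph{switch} if $|X \cap \{v_{i-1}, v_i\}| = 1$. I will show that every switch $i$ satisfies
\begin{equation*}
  f(A_i) + f(B_i) \ \ge\ m_i + f(V_{i-1}).
\end{equation*}
By hypothesis $n$ is a switch, and $A_n = X$, $B_n = V \setminus X$, $m_n = \min_{x\in X} f(\{x\})$, so the case $i = n$ is exactly \eqref{eq:cut-0-order}. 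At a switch both $A_i$ and $B_i$ are nonempty, so $m_i$ is defined. Only submodularity and the MC-ordering property are used; symmetry is not needed.

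\emph{Base case: the first switch $i_0$.} Here $v_1, \dots, v_{i_0-1}$ lie on one side of $X$ and $v_{i_0}$ on the other. Hence $\{A_{i_0}, B_{i_0}\} = \{V_{i_0-1}, \{v_{i_0}\}\}$, and the claim reduces to $f(\{v_{i_0}\}) \ge m_{i_0}$.
\begin{itemize}
  \item If $v_{i_0} \in X$, this holds by the definition of $m_{i_0}$.
  \item Otherwise $v_1 \in A_{i_0}$. The MC condition at $i=1$ (with $V_0 = \varnothing$) gives $f(\{v_1\}) \le f(\{v_{i_0}\})$, so $m_{i_0} \le f(\{v_1\}) \le f(\{v_{i_0}\})$.
\end{itemize}

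\emph{Inductive step.} Let $i$ be a switch and $j < i$ the previous switch. Then $v_j, \dots, v_{i-1}$ lie on one side and $v_i$ on the other. Suppose first that $v_i \in X$, so $v_j, \dots, v_{i-1} \in B_i$. Then $A_j \subseteq V_{j-1}$, $A_i = A_j \cup \{v_i\}$, and $B_i = B_j \cup \{v_{j+1}, \dots, v_{i-1}\}$, so that $B_i \cap V_j = B_j$ and $B_i \cup V_j = V_{i-1}$. Submodularity applied to the pair $(A_j \cup \{v_i\}, V_{j-1})$, followed by the MC condition, gives
\[
  f(A_i) + f(V_{j-1}) \ \ge\ f(V_{j-1}\cup\{v_i\}) + f(A_j) \ \ge\ f(V_j) + f(A_j).
\]
Submodularity applied to the pair $(B_i, V_j)$ gives
\[
  f(B_i) + f(V_j) \ \ge\ f(V_{i-1}) + f(B_j).
\]
Add these two inequalities, apply the induction hypothesis $f(A_j) + f(B_j) \ge m_j + f(V_{j-1})$, and cancel $f(V_j) + f(V_{j-1})$ from both sides. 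This yields $f(A_i) + f(B_i) \ge m_j + f(V_{i-1})$. Finally $m_j \ge m_i$, because $A_j \subseteq A_i$.

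The case $v_i \notin X$ is the same computation with the roles of $A$ and $B$ exchanged: $B_j \subseteq V_{j-1}$ and $B_i = B_j \cup \{v_i\}$. The induction hypothesis is symmetric in $A_j$ and $B_j$ except for the term $m_j$. That term still refers to the $X$-side, and $A_j \subseteq A_i$ still holds, so the same bound $m_j \ge m_i$ applies.

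\emph{Main obstacle.} The delicate part is choosing the right strengthened invariant. The right-hand side must carry $f(V_{i-1})$, so that the MC inequality $f(V_{j-1} \cup \{v_i\}) \ge f(V_j)$ and the telescoping cancellation line up between consecutive switches. It must also carry the minimum over the $X$-side only, so that the term $m_j$ is monotone as $A_i$ grows. I would check the base case carefully, since it is the only place where the choice of $v_1$ as a minimizer of $f(\{\cdot\})$ enters.
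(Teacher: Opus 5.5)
Your proof is correct, but it takes a different route from the paper's.

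\textbf{The paper's route.} It inducts on $|V|$ by contracting $v_1$ into the submodular function $f'(Y)=f(Y\cup\{v_1\})-f(\{v_1\})$ on $V\setminus\{v_1\}$, for which $(v_2,\dots,v_n)$ is again an MC-ordering. When $v_1\notin X$, the induction hypothesis combined with one diminishing-returns inequality, $f(X\cup\{v_1\})-f(X)\le f(\{v_1,x^*\})-f(\{x^*\})$ with $x^*$ minimizing $f'(\{\cdot\})$ over $X$, gives the bound. When $v_1\in X$, it applies that case to $V\setminus X$ and uses that $v_1$ minimizes $f(\{\cdot\})$.

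\textbf{Your route.} You keep $f$ fixed and induct along the ordering over the switch indices, in the Frank/Stoer--Wagner style, carrying the prefix invariant $f(A_i)+f(B_i)\ge m_i+f(V_{i-1})$. The steps check out:
\begin{itemize}
\item the set identities $A_j\subseteq V_{j-1}$, $B_i\cap V_j=B_j$ and $B_i\cup V_j=V_{i-1}$;
\item the two submodular steps and the MC comparison $f(V_j)\le f(V_{j-1}\cup\{v_i\})$;
\item the cancellation and the monotonicity $m_j\ge m_i$;
\item the base case and the mirrored case $v_i\notin X$.
\end{itemize}

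\textbf{Comparison.} The paper's recursion is shorter. It needs no bookkeeping beyond checking that submodularity and the MC property survive contraction, and it handles the asymmetry of the $\min_{x\in X}$ term by complementation. Your argument is longer and depends on finding the right invariant. In return it is fully explicit about where each hypothesis enters: one MC comparison and two submodular inequalities per switch, with the singleton-minimality of $v_1$ used only in the base case. It handles the asymmetry through monotonicity of $m_i$, and it shows the lemma to be a direct generalization of the classical pendent-pair proofs, which fits the paper's unifying theme. The intermediate prefix statements are not genuinely stronger than the lemma, though: they also follow by applying the lemma to $f$ restricted to $2^{V_i}$, since $(v_1,\dots,v_i)$ is an MC-ordering of that restriction.
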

\begin{proof}
We intend to show the statement by induction on $n=|V|$. If $n=2$, the statement obviously holds. For $n>2$, consider a submodular set function $f':2^{V'}\to\R$ on $V':=V\setminus\{v_1\}$ defined by 
$$f'(Y):=f(Y\cup\{v_1\})-f(\{v_1\})\quad\quad(\forall Y\subseteq V').$$
If $v_1\notin X$, by the inductive assumption, we have  
$$f'(X)+f'(V'\setminus X)\geq f'(\{x^*\})+f'(V_{n-1}\setminus\{v_1\}),$$
where $x^*:=\arg\min\{f'(\{x\})\mid x\in X\}$. This means that 
\begin{equation}\label{eq:ind}
f(X\cup\{v_1\})+f(V\setminus X)\geq f(\{v_1,x^*\})+f(V_{n-1}).    
\end{equation}
In addition, by the submodularity of $f$, we have $f(X\cup\{v_1\})-f(X)\leq f(\{v_1,x^*\})-f(\{x^*\})$, which together with \eqref{eq:ind} implies that 
$$f(X)+f(V\setminus X)\geq f(\{x^*\})+f(V_{n-1})\geq\min_{x\in X}f(\{x\})+f(V_{n-1}).$$
Otherwise, since $v_1\notin V\setminus X$, we have  
$$f(X)+f(V\setminus X)\geq\min_{z\in V\setminus X}f(\{z\})+f(V_{n-1})
\geq f(\{v_1\})+f(V_{n-1})\geq\min_{x\in X}f(\{x\})+f(V_{n-1}),$$
where the second inequality follows from the choice of $v_1$ and the third from $v_1\in X$. Thus, in either case, \eqref{eq:cut-0-order} holds. 
\end{proof}

\begin{theorem}
\label{th:MC-ordering}
Let $f:2^V\to\R$ be a symmetric submodular function and $(v_1,\ldots,v_n)$ be an MC-ordering of $V$ with respect to $f$. Then $(v_{n-1},v_n)$ is a contractible pair. 
\end{theorem}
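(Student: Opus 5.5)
The plan is to derive the statement directly from Lemma~\ref{lem:MC-ordering}, using symmetry to turn its left-hand side into $2f(X)$ and then controlling the term $f(V_{n-1})$.

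Fix $X \subsetneq V$ with $|X\cap\{v_{n-1},v_n\}|=1$. Since $f$ is symmetric, $f(V\setminus X)=f(X)$, and inequality \eqref{eq:cut-0-order} becomes
$$2f(X)\;\geq\;\min_{x\in X} f(\{x\}) + f(V_{n-1}) \;\geq\; \min_{x\in V} f(\{x\}) + f(V_{n-1}).$$
Write $m:=\min_{x\in V}f(\{x\})$. If we show $f(V_{n-1})\geq m$, then $2f(X)\geq 2m$, which is exactly the contractibility condition.

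For the bound on $f(V_{n-1})$, symmetry gives $f(V_{n-1})=f(V\setminus V_{n-1})=f(\{v_n\})\geq m$. This closes the argument. Note that the lemma needs only submodularity, and symmetry enters twice: once to merge $f(X)$ with $f(V\setminus X)$, and once to identify $f(V_{n-1})$ with a singleton value.

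The only point to check is that the lemma's hypotheses match the situation. The lemma is stated for arbitrary submodular $f$ and any MC-ordering, with $X\subsetneq V$ meeting $\{v_{n-1},v_n\}$ in exactly one element, so the hypotheses coincide. Since $X$ is proper and nonempty (it contains one of the two elements), $\min_{x\in X}$ is well defined. I expect no real obstacle; the substantive work was already done in Lemma~\ref{lem:MC-ordering}.
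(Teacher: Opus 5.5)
Your proof is correct and matches the paper's own argument. Both apply Lemma~\ref{lem:MC-ordering}, use symmetry to rewrite $f(X)+f(V\setminus X)$ as $2f(X)$ and $f(V_{n-1})$ as $f(\{v_n\})$, and then bound both singleton terms below by $\min_{x\in V} f(\{x\})$.
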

\begin{proof}
Since $f$ is a symmetric submodular function, Lemma~\ref{lem:MC-ordering} implies that 
\begin{equation*}
2f(X)=f(X)+f(V\setminus X)\geq \min_{x\in X} f(\{x\})+f(\{v_n\})\geq 2\min_{x\in V} f(\{x\}) 
\end{equation*}
holds for any subset $X\subsetneq V$ with $|X\cap\{v_{n-1},v_n\}|=1$. 
Thus, $(v_{n-1},v_n)$ is a contractible pair. 
\end{proof}

\section{A Parametrized Family of Orderings}
\label{sec:alpha-ordering}
In this section, we introduce a family of orderings parametrized by $\alpha\in\R$, and show that previously known results on MA-ordering and MD-ordering can be understood as special cases of a unified framework. 

\begin{definition}[$\alpha$-Ordering]
  An ordering $(v_1, \dots, v_n)$ of $V$ is called an $\alpha$-ordering with respect to a set function $f$ if
  \begin{equation}\label{def:alpha-order}
    f(V_{i - 1} \cup \{v_i\}) + \alpha f(\{v_i\})
    \leq f(V_{i - 1} \cup \{v_j\}) + \alpha f(\{v_j\}) 
  \end{equation}
  holds for every pair $(i,j)$ such that $1 \leq i \leq j \leq n$. 
\end{definition}

For $\alpha = -1$, $\alpha = 0$, and $\alpha=1$, the definition of $\alpha$-ordering is tantamount to that of MA-ordering, MC-ordering, and MD-ordering, respectively.

For a submodular function $f:2^V\to\R$, let $g:2^V\to\R$ be a set function defined by 
\begin{equation*}\label{eq:g}
g(S):=f(S)+\alpha\sum_{v\in S}f(\{v\})\quad\quad(\forall S\subseteq V).
\end{equation*}
Obviously, $g$ is also a submodular function. In addition, we have the following lemma. 
\begin{lemma}\label{lem:0-to-alpha}
  Any $\alpha$-ordering $(v_1, \dots, v_n)$ with respect to $f$ 
  is an MC-ordering with respect to $g$. 
\end{lemma}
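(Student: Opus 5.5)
We verify the MC-ordering inequality for $g$ directly by expanding $g$ at the two sets being compared. Fix $1\le i\le j\le n$. Since $v_i,v_j\notin V_{i-1}$, additivity of the modular part of $g$ gives
\begin{equation*}
g(V_{i-1}\cup\{v_k\}) = f(V_{i-1}\cup\{v_k\}) + \alpha\sum_{v\in V_{i-1}} f(\{v\}) + \alpha f(\{v_k\})
\end{equation*}
for $k\in\{i,j\}$.

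We then compare the two expressions. The term $\alpha\sum_{v\in V_{i-1}} f(\{v\})$ is common to $k=i$ and $k=j$. So the inequality $g(V_{i-1}\cup\{v_i\})\le g(V_{i-1}\cup\{v_j\})$ is equivalent to
\begin{equation*}
f(V_{i-1}\cup\{v_i\})+\alpha f(\{v_i\})\le f(V_{i-1}\cup\{v_j\})+\alpha f(\{v_j\}).
\end{equation*}
This is exactly condition \eqref{def:alpha-order} in the definition of an $\alpha$-ordering with respect to $f$.

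Hence the MC-ordering condition for $g$ holds for every pair $(i,j)$ with $1\le i\le j\le n$. There is no real obstacle here. The only point to check is that $v_i$ and $v_j$ lie outside $V_{i-1}$, so that the sum over $V_{i-1}\cup\{v_k\}$ splits as written. This holds because $i\le j$ and $V_{i-1}=\{v_1,\dots,v_{i-1}\}$. The case $i=j$ is trivial.

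Submodularity is not needed for this lemma. It is needed later, when Lemma~\ref{lem:MC-ordering} is applied to $g$.
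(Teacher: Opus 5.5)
Your proposal is correct and follows the paper's proof: both expand $g(V_{i-1}\cup\{v_k\})$ for $k\in\{i,j\}$, cancel the common term $\alpha\sum_{h=1}^{i-1} f(\{v_h\})$, and reduce the comparison to the $\alpha$-ordering inequality \eqref{def:alpha-order}. Your explicit checks are also right: $v_i,v_j\notin V_{i-1}$ is what lets the sum split, and submodularity plays no role in this lemma.
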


\begin{proof}
For any $1 \leq i \leq j \leq n$, we have
\begin{eqnarray*}
g(V_{i - 1} \cup \{v_i\})
    & = & f(V_{i - 1} \cup \{v_i\}) + \alpha f(\{v_i\}) + \alpha \sum_{h=1}^{i-1} f(\{v_h\}) \\
    & \leq & f(V_{i - 1} \cup \{v_j\}) + \alpha f(\{v_j\}) + \alpha \sum_{h=1}^{i-1} f(\{v_h\}) \\
    & = &  g(V_{i - 1} \cup \{v_j\}),
\end{eqnarray*}
where the inequality follows from \eqref{def:alpha-order}. 
Thus, $(v_1, \dots, v_n)$ is an MC-ordering of $V$ with respect to $g$.
\end{proof}

\begin{theorem}\label{thm:alpha-order-cut}
Let $f: 2^V \to \mathbb{R}$ be a symmetric submodular function and $(v_1, \dots, v_n)$ be an $\alpha$-ordering of $V$. Then for an arbitrary subset 
$X \subsetneq V$ with $|X \cap \{v_{n - 1}, v_n\}| =1$, we have  \begin{equation}\label{eq:alpha-order-cut_0}
  2f(X) \geq \min_{x \in X} (1 + \alpha)f(\{x\}) + (1 - \alpha)f(\{v_n\}).
\end{equation} 
\end{theorem}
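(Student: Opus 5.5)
The plan is to transfer the problem to the auxiliary function $g$ and then apply Lemma~\ref{lem:MC-ordering}. Since $g$ is submodular and, by Lemma~\ref{lem:0-to-alpha}, $(v_1,\dots,v_n)$ is an MC-ordering with respect to $g$, Lemma~\ref{lem:MC-ordering} gives, for every $X\subsetneq V$ with $|X\cap\{v_{n-1},v_n\}|=1$,
\begin{equation*}
g(X)+g(V\setminus X)\geq \min_{x\in X} g(\{x\})+g(V_{n-1}).
\end{equation*}
The remaining work is to rewrite each term in terms of $f$.

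First, the left-hand side. Since $X$ and $V\setminus X$ partition $V$, we get $g(X)+g(V\setminus X)=f(X)+f(V\setminus X)+\alpha\sum_{v\in V}f(\{v\})$. By symmetry of $f$, this equals $2f(X)+\alpha\sum_{v\in V}f(\{v\})$.

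Second, the right-hand side. For singletons we have $g(\{x\})=(1+\alpha)f(\{x\})$. For the last term, $g(V_{n-1})=f(V_{n-1})+\alpha\sum_{v\neq v_n}f(\{v\})$. Symmetry gives $f(V_{n-1})=f(V\setminus V_{n-1})=f(\{v_n\})$, so $g(V_{n-1})=f(\{v_n\})+\alpha\sum_{v\in V}f(\{v\})-\alpha f(\{v_n\})$.

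Substituting these expressions, the common term $\alpha\sum_{v\in V}f(\{v\})$ cancels from both sides. What remains is exactly \eqref{eq:alpha-order-cut_0}:
\begin{equation*}
2f(X)\geq \min_{x\in X}(1+\alpha)f(\{x\})+(1-\alpha)f(\{v_n\}).
\end{equation*}

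There is no real obstacle. The only points needing care are the bookkeeping of the modular correction term and the observation that the minimum of $g(\{x\})$ over $X$ is the same as the minimum of $(1+\alpha)f(\{x\})$ over $X$. The statement is phrased with $(1+\alpha)$ inside the minimum, so no sign assumption on $1+\alpha$ is needed. Symmetry is used only in the two places noted: $f(V\setminus X)=f(X)$ and $f(V_{n-1})=f(\{v_n\})$.
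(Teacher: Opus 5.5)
Your proposal is correct and follows the paper's proof essentially step for step. You use Lemma~\ref{lem:0-to-alpha} to make the ordering an MC-ordering for $g$, apply Lemma~\ref{lem:MC-ordering} to $g$, expand the modular correction term, and use symmetry ($f(V\setminus X)=f(X)$ and $f(V_{n-1})=f(\{v_n\})$) so that $\alpha\sum_{v\in V}f(\{v\})$ cancels.
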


\begin{proof}
Lemma~\ref{lem:0-to-alpha} implies that $(v_1, \dots, v_n)$ is an MC-ordering of $V$ with respect to $g$. Since $g$ is a submodular function, it follows 
from Lemma~\ref{lem:MC-ordering} that 
\begin{equation}\label{eq:alpha-order-cut_1}
    g(X) + g(V \setminus X) \geq \min_{x \in X} g(\{x\}) + g(V_{n-1})
\end{equation}
holds. By the definition of $g$, we have 
\begin{align}
   \label{eq:alpha-order-cut_2}
    g(X) + g(V \setminus X)
    &= f(X) + f(V \setminus X) + \alpha \sum_{v \in V} f(\{v\}), \\
    \label{eq:alpha-order-cut_3}
    \min_{x \in X} g(\{x\}) + g(V_{n - 1}) 
    &= \min_{x \in X}(1 + \alpha)f(\{x\}) + f(V_{n - 1}) + \alpha\sum_{h=1}^{n-1} f(\{v_h\}).
\end{align}
Combining~\eqref{eq:alpha-order-cut_1}--\eqref{eq:alpha-order-cut_3} and the  symmetry of $f$, we have
\begin{eqnarray*}
2f(X) & = & f(X) + f(V \setminus X) \\ 
    & \geq & \min_{x \in X}(1 + \alpha)f(\{x\})
    + f(V_{n - 1}) + \alpha\sum_{h=1}^{n-1} f(\{v_h\})
    - \alpha \sum_{v \in V} f(\{v\}) \\
    & = & \min_{x \in X}(1 + \alpha)f(\{x\}) + (1 - \alpha)f(\{v_n\}),
\end{eqnarray*}
which proves~\eqref{eq:alpha-order-cut_0}.
\end{proof}

Note that Theorem~\ref{thm:alpha-order-cut} unifies Theorems \ref{thm:ma-order-cut} and \ref{thm:md-order-cut}. In fact, the case $\alpha = -1$ recovers the pendent-pair result, while the case $\alpha = 1$ recovers the flat-pair result.

\begin{corollary}\label{cor:find-contractible}
  Let $f: 2^V \to \mathbb{R}$ be a symmetric submodular function
  and $(v_1, \dots, v_n)$ be an $\alpha$-ordering of $V$ with respect to $f$.
  If $\alpha \in [-1, 1]$, then $(v_{n - 1}, v_n)$ is a contractible pair.
\end{corollary}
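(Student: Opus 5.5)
The plan is to read Corollary~\ref{cor:find-contractible} directly off Theorem~\ref{thm:alpha-order-cut}. For $\alpha\in[-1,1]$ the two coefficients $1+\alpha$ and $1-\alpha$ are both nonnegative and sum to $2$. So the right-hand side of \eqref{eq:alpha-order-cut_0} is a weighted combination of singleton values with total weight $2$. This is precisely the shape needed for the contractible-pair inequality.

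Fix any $X\subsetneq V$ with $|X\cap\{v_{n-1},v_n\}|=1$, and set $m:=\min_{x\in V} f(\{x\})$. Theorem~\ref{thm:alpha-order-cut} gives
\[
2f(X)\geq \min_{x\in X}(1+\alpha)f(\{x\})+(1-\alpha)f(\{v_n\}).
\]
The first term is bounded using $1+\alpha\geq 0$. For each $x\in X$ we have $(1+\alpha)f(\{x\})\geq(1+\alpha)m$, so the minimum over $X$ is at least $(1+\alpha)m$. Here $X$ is nonempty because it contains one of $v_{n-1},v_n$. The second term is bounded using $1-\alpha\geq 0$: since $f(\{v_n\})\geq m$, we get $(1-\alpha)f(\{v_n\})\geq(1-\alpha)m$. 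Adding the two bounds yields $2f(X)\geq 2m$, that is, $f(X)\geq \min_{x\in V}f(\{x\})$. This is exactly the defining condition of a contractible pair, so $(v_{n-1},v_n)$ is contractible.

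There is no real obstacle here. The only points needing care are the following. First, when $1+\alpha=0$ the minimum term degenerates to $0$, and this is consistent with the bound $(1+\alpha)m=0$. Second, the sign conditions $1\pm\alpha\geq0$ are exactly where the hypothesis $\alpha\in[-1,1]$ is used, which foreshadows the converse results of Section~\ref{sec:cond}. Combined with Theorem~\ref{thm:optimal-set-contractibility}, computing an $\alpha$-ordering with $O(n^2)$ oracle calls at each level gives the $O(n^3)$ algorithm.
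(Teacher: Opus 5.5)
Your proposal is correct and matches the paper's own proof: apply Theorem~\ref{thm:alpha-order-cut}, then use $1+\alpha\geq 0$ and $1-\alpha\geq 0$ to bound each term from below by the corresponding multiple of $\min_{v\in V} f(\{v\})$. The two weights sum to $2$, which gives $f(X)\geq \min_{v\in V} f(\{v\})$.
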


\begin{proof}
Since $\alpha\in [-1, 1]$, we have $1 + \alpha \geq 0$ and $1 - \alpha \geq 0$.
By Theorem~\ref{thm:alpha-order-cut}, for any subset $X \subsetneq V$ with $|X \cap \{v_{n - 1}, v_n\}| = 1$, we have  
\begin{eqnarray*}
2f(X) & \geq & \min_{x \in X}(1 + \alpha)f(\{x\}) + (1 - \alpha)f(\{v_n\}) \\
      & \geq & \min_{v \in V}(1 + \alpha)f(\{v\}) + (1 - \alpha)\min_{v \in V} f(\{v\}) \\
      & = & 2 \min_{v \in V} f(\{v\}).
\end{eqnarray*}
Thus, $(v_{n - 1}, v_n)$ is a contractible pair. 
\end{proof}

\section{Conditions for Obtaining Special Pairs from $\alpha$-Ordering}
\label{sec:cond}
In this section, we examine the ranges of $\alpha$ that ensure $\alpha$-ordering to yield contractible, pendent, and flat pairs. 

As shown in Corollary~\ref{cor:find-contractible}, if $\alpha\in[-1,1]$, then the last two elements of an $\alpha$-ordering form a contractible pair. The following proposition shows the converse.  

\begin{proposition}\label{prop:requirement-for-contractible}
  For any $\alpha \notin [-1, 1]$, there exists a symmetric submodular function
  $f : 2^V \to \mathbb{R}$ and an $\alpha$-ordering $(v_1, \dots, v_n)$
  of $V$ with respect to $f$ such that $(v_{n - 1}, v_n)$ is not a contractible pair.
\end{proposition}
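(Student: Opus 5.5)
We will prove the statement by an explicit counterexample built from the cut capacity function $\kappa$ of a small capacitated undirected graph; $\kappa$ is symmetric and submodular, so only the ordering and the non-contractibility need checking. The idea is two heavy edges joined by a light bridge. The cut separating the two heavy edges is far below every singleton value, and we will force the last two elements of the $\alpha$-ordering to lie on opposite sides of the bridge.

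Concretely, let $V=\{a,b,c,d\}$ with edges $\{a,b\}$ and $\{c,d\}$ of capacity $M$ and edge $\{b,c\}$ of capacity $1$, where
\[
M:=\frac{1+|\alpha|}{2}.
\]
Since $|\alpha|>1$, we have $M>1$. The singleton values are $\kappa(\{a\})=\kappa(\{d\})=M$ and $\kappa(\{b\})=\kappa(\{c\})=M+1$, so $\min_{x\in V}\kappa(\{x\})=M$. On the other hand, $X=\{a,b\}$ has $\kappa(X)=1<M$. Hence it suffices to exhibit an $\alpha$-ordering whose last two elements are separated by $X$, that is, one lies in $\{a,b\}$ and the other in $\{c,d\}$. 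Ties in the definition of an $\alpha$-ordering may be broken arbitrarily, since the defining inequality \eqref{def:alpha-order} is non-strict.

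\emph{Case $\alpha>1$.} At the first step the criterion is $(1+\alpha)\kappa(\{v\})$, which is minimized by $a$ (tied with $d$), so we take $v_1=a$. At the second step the criterion $\kappa(\{a,v\})+\alpha\kappa(\{v\})$ takes the values
\[
b:\ (M+1)+\alpha(M+1),\qquad c:\ (2M+1)+\alpha(M+1),\qquad d:\ 2M+\alpha M .
\]
The value for $d$ is less than the value for $b$ exactly when $M-1<\alpha$. This holds because $M-1=(\alpha-1)/2<\alpha$. The value for $c$ exceeds the value for $b$, so $v_2=d$. The remaining elements $b$ and $c$ are then the last two, in some order, and $X=\{a,b\}$ separates them.

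\emph{Case $\alpha<-1$.} Write $\beta=-\alpha>1$. At the first step the criterion is $(1-\beta)\kappa(\{v\})$, which is minimized at maximum degree, so we take $v_1=b$. At the second step the criterion $\kappa(\{b,v\})-\beta\kappa(\{v\})$ takes the values
\[
a:\ 1-\beta M,\qquad c:\ 2M-\beta(M+1),\qquad d:\ 2M+2-\beta M .
\]
The value for $c$ is less than the value for $a$ exactly when $2M<1+\beta$. This should hold because $2M=1+\beta$ by our choice of $M$. \textbf{Here the choice of $M$ fails}: the comparison is a tie, not a strict inequality. The fix is to take $M$ with $1<M<(1+|\alpha|)/2$ instead; this still satisfies $M-1<\alpha$ in the first case. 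With this choice, $c$ beats $a$, and $c$ clearly beats $d$, so $v_2=c$. The last two elements are then $a$ and $d$, which $X=\{a,b\}$ separates.

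In both cases we have an $\alpha$-ordering and a set $X$ with $|X\cap\{v_{n-1},v_n\}|=1$ and $\kappa(X)=1<\min_{x\in V}\kappa(\{x\})$. Hence $(v_{n-1},v_n)$ is not a contractible pair. The only delicate step is choosing $M$ so that both strict comparisons above hold simultaneously with $M>1$. This is why the window $1<M<(1+|\alpha|)/2$ is needed, and it is nonempty precisely because $|\alpha|>1$.
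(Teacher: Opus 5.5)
Your construction is correct, but the check in the case $\alpha>1$ uses a wrong cut value. The set $\{a,b\}$ is crossed only by the edge $\{b,c\}$, so $\kappa(\{a,b\})=1$ (the very value you use for $X$). The second-step value for $b$ is therefore $1+\alpha(M+1)$, not $(M+1)+\alpha(M+1)$. Consequently $d$ beats $b$ exactly when $2M<1+\alpha$, not when $M-1<\alpha$.

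The condition $M-1<\alpha$ is genuinely too weak. Take $(1+\alpha)/2<M<1+\alpha$, for instance $\alpha=2$, $M=2$. Then $v_1=a$ forces $v_2=b$, the last pair is $\{c,d\}$, and this pair is contractible in your graph. Your final window $1<M<(1+|\alpha|)/2$ does imply $2M<1+\alpha$, so the conclusion survives, but that is the inequality you must cite. Also $\kappa(\{b,d\})=2M+1$, not $2M+2$, which is harmless.

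With the correct values, the requirement in both cases is $2M\le 1+|\alpha|$. So your original choice $M=(1+|\alpha|)/2$ never failed: it produces a tie at the second step in each case, and you had already noted that ties may be broken arbitrarily.

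For $\alpha>1$ your argument matches the paper's Case~1. Your graph scaled by $1/M$ with $M=(1+\alpha)/2$ is exactly that graph: a path whose end edges have capacity $1$ and whose middle edge has capacity $2/(1+\alpha)$. It has the same tie at the second step and the same type of separating set.

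For $\alpha<-1$ your route genuinely differs. The paper switches to two separate five-vertex graphs, split at $\alpha=-2$; as written, its ranges even omit $\alpha=-2$, though its Case~3 graph works there. You reuse the same four-vertex path:
\begin{itemize}
\item since $1+\alpha<0$, the first step picks an inner vertex of maximum degree;
\item its inner neighbour wins the second step;
\item the two end vertices are left last, again separated by $\{a,b\}$.
\end{itemize}
This gives a single construction for all $|\alpha|>1$, with both cases reduced to the one condition $1<M\le(1+|\alpha|)/2$. That makes it transparent why the threshold is exactly $|\alpha|=1$, and shows the paper's five-vertex examples are not needed.
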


\begin{proof}
  We construct a symmetric submodular function $f$ 
  as the cut function of an undirected graph $G = (V, E)$ with weight function $c : E \to \R_+$.
  We distinguish three cases. 
  
  \smallskip
  \noindent\textbf{Case 1 ($\alpha > 1$).}
  Let $G=(V,E)$ be a graph with the vertex set $V = \{p_1, p_2, p_3, p_4\}$ and the edge set $E = \left\{\{p_1, p_4\}, \{p_2, p_3\}, \{p_3, p_4\}\right\}$.
  Define the edge capacity function $c : E \to \mathbb{R}_+$ by
  \begin{equation*}
    c(p_1, p_4) = c(p_2, p_3) = 1,\quad
    c(p_3, p_4) = \kappa_\alpha,
  \end{equation*}
  where $\kappa_\alpha = \frac{2}{1 + \alpha}$. Since $\alpha > 1$, we have $0 < \kappa_\alpha < 1$. \par
  We show that there exists an $\alpha$-ordering $(v_1, v_2, v_3, v_4)$ of $V$
  with respect to $f$ such that $\{v_3, v_4\} = \{p_3, p_4\}$.
  First, for each $v \in V$, we have
  \begin{equation*}
    f(\{p_1\}) = 1,\quad
    f(\{p_2\}) = 1,\quad
    f(\{p_3\}) = 1 + \kappa_\alpha,\quad
    f(\{p_4\}) = 1 + \kappa_\alpha.
  \end{equation*}
  Since $\alpha > -1$ and $\kappa_\alpha > 0$, we observe $\arg\min_{v\in V}(1 + \alpha)f(\{v\}) = \{p_1, p_2\}$. 
  Thus, we can assume that $v_1 = p_1$.
  Next, for each $v \in V \setminus \{p_1\}$, we have
  \begin{align*}
    f(\{p_1, p_2\}) + \alpha f(\{p_2\}) &= 2 + \alpha, \\
    f(\{p_1, p_3\}) + \alpha f(\{p_3\}) &= 2 + \kappa_\alpha + \alpha(1 + \kappa_\alpha), \\
    f(\{p_1, p_4\}) + \alpha f(\{p_4\}) &= \kappa_\alpha + \alpha(1 + \kappa_\alpha).
  \end{align*}
  Since $\kappa_\alpha=\frac{2}{1 + \alpha} > 0$, we obtain
  \begin{equation*}
    2 + \alpha
    = \frac{2(1 + \alpha)}{1 + \alpha} + \alpha
    = \kappa_\alpha + \alpha(1 + \kappa_\alpha)
    < 2 + \kappa_\alpha + \alpha(1 + \kappa_\alpha).
  \end{equation*}
  Therefore, $\arg\min_{v \in V \setminus \{p_1\}} \left(f(\{p_1, v\}) + \alpha f(\{v\})\right) = \{p_2, p_4\}$.
  Thus, we can assume that $v_2 = p_2$.
  It follows that $\{v_3, v_4\} = \{p_3, p_4\}$. \par
  We now consider $X = \{p_1, p_4\}$, which satisfies $|X \cap \{p_3, p_4\}| = 1$.
  Moreover, since $\kappa_\alpha < 1$,
  \begin{equation*}
    f(X) = \kappa_\alpha < 1 = f(\{p_1\}) = \min_{v \in V} f(\{v\}).
  \end{equation*}
  Therefore, $(p_3, p_4)$ is not a contractible pair.

  \smallskip
  \noindent\textbf{Case 2 ($-2 < \alpha < -1$).}
  Let $G=(V,E)$ be a graph with the vertex set $V = \{p_1, p_2, p_3, p_4, p_5\}$ and the edge set 
  $E = \left\{\{p_1, p_2\}, \{p_1, p_3\}, \{p_1, p_5\}, \{p_3, p_4\}, \{p_3, p_5\}\right\}$.
  Define the edge capacity function $c : E \to \mathbb{R}_+$ by
  \begin{equation*}
    c(p_1, p_3) = c(p_3, p_5) = 1,\quad
    c(p_1, p_5) = 2,\quad
    c(p_1, p_2) = c(p_3, p_4) = \kappa_\alpha,
  \end{equation*}
  where $\kappa_\alpha = 1 + \frac{2}{-1 - \alpha}$. Since $-2 < \alpha < -1$, we have $\kappa_\alpha > 3$. \par
  We show that there exists an $\alpha$-ordering $(v_1, v_2, v_3, v_4, v_5)$ of $V$
  with respect to $f$ such that $\{v_4, v_5\} = \{p_4, p_5\}$.
  First, for each $v \in V$, we have
  \begin{equation*}
    f(\{p_1\}) = 3 + \kappa_\alpha,\quad
    f(\{p_2\}) = \kappa_\alpha,\quad
    f(\{p_3\}) = 2 + \kappa_\alpha,\quad
    f(\{p_4\}) = \kappa_\alpha,\quad
    f(\{p_5\}) = 3.
  \end{equation*}
  Since $\alpha < -1$ and $\kappa_\alpha > 3$, we obtain $\arg\min_{v\in V}(1 + \alpha)f(\{v\}) = \{p_1\}$, which implies $v_1 = p_1$.
  Next, for each $v \in V \setminus \{p_1\}$, we have
  \begin{align*}
    f(\{p_1, p_2\}) + \alpha f(\{p_2\}) &= 3 + \alpha \kappa_\alpha, \\
    f(\{p_1, p_3\}) + \alpha f(\{p_3\}) &= 3 + 2\kappa_\alpha + \alpha(2 + \kappa_\alpha), \\
    f(\{p_1, p_4\}) + \alpha f(\{p_4\}) &= 3 + 2\kappa_\alpha + \alpha \kappa_\alpha, \\
    f(\{p_1, p_5\}) + \alpha f(\{p_5\}) &= 2 + \kappa_\alpha + 3\alpha.
  \end{align*}
  Since $-2 < \alpha < -1$ and $\kappa_\alpha > 3$, we obtain
  \begin{align*}
    \left(3 + 2\kappa_\alpha + \alpha(2 + \kappa_\alpha)\right) - \left(3 + \alpha \kappa_\alpha\right)
    &= 2(\alpha + \kappa_\alpha)
    > 2(-2 + 3)
    > 0, \\
    \left(3 + 2\kappa_\alpha + \alpha \kappa_\alpha\right) - \left(3 + \alpha \kappa_\alpha\right)
    &= 2\kappa_\alpha
    > 0, \\
    \left(2 + \kappa_\alpha + 3\alpha\right) - \left(3 + \alpha \kappa_\alpha\right)
    &= (1 - \alpha)\kappa_\alpha + 3\alpha - 1 
    = \frac{2\left(2 - (\alpha + 1)^2\right)}{-\alpha - 1}
    > 0.
  \end{align*}
  Therefore, $\arg\min_{v \in V \setminus \{p_1\}} \left(f(\{p_1, v\}) + \alpha f(\{v\})\right) = \{p_2\}$.
  Thus, we have $v_2 = p_2$.
  Moreover, for each $v \in V \setminus \{p_1, p_2\}$, we have
  \begin{align*}
    f(\{p_1, p_2, p_3\}) + \alpha f(\{p_3\}) &= 3 + \kappa_\alpha + \alpha(2 + \kappa_\alpha), \\
    f(\{p_1, p_2, p_4\}) + \alpha f(\{p_4\}) &= 3 + \kappa_\alpha + \alpha \kappa_\alpha, \\
    f(\{p_1, p_2, p_5\}) + \alpha f(\{p_5\}) &= 2 + 3\alpha.
  \end{align*}
  Since $-2 < \alpha < -1$ and $\kappa_\alpha > 3$, we obtain
  \begin{align*}
    \left(3 + \kappa_\alpha + \alpha \kappa_\alpha\right) - \left(3 + \kappa_\alpha + \alpha(2 + \kappa_\alpha)\right)
    &= -2\alpha
    > 0, \\
    \left(2 + 3\alpha\right) - \left(3 + \kappa_\alpha + \alpha(2 + \kappa_\alpha)\right)
    &= (1 + \alpha)\kappa_\alpha + (1 - \alpha)
    = 0.
  \end{align*}
  Therefore, we have $\arg\min_{v \in V \setminus \{p_1, p_2\}} \left(f(\{p_1, p_2, v\}) + \alpha f(\{v\})\right) = \{p_3, p_5\}$.
  Thus, we can assume that $v_3 = p_3$.
  It follows that $\{v_4, v_5\} = \{p_4, p_5\}$. \par
  We now consider $X = \{p_3, p_4\}$, which satisfies $|X \cap \{p_4, p_5\}| = 1$.
  Moreover, since $\kappa_\alpha > 3$,
  \begin{equation*}
    f(X) = 2 < 3 = f(\{p_5\}) = \min_{v \in V} f(\{v\}).
  \end{equation*}
  Therefore, $(p_4, p_5)$ is not a contractible pair.

  \smallskip
  \noindent\textbf{Case 3 ($\alpha < -2$).}
  Let $G=(V,E)$ be a graph with the vertex set $V = \{p_1, p_2, p_3, p_4, p_5\}$ and the edge set $E = \left\{\{p_1, p_2\}, \{p_1, p_3\}, \{p_1, p_5\}, \{p_3, p_4\}, \{p_3, p_5\}\right\}$.
  Define the edge capacity function $c : E \to \mathbb{R}_+$ by
  \begin{equation*}
    c(p_1, p_3) = c(p_3, p_5) = 1,\quad
    c(p_1, p_5) = 2,\quad
    c(p_1, p_2) = 7,\quad
    c(p_3, p_4) = 4.
  \end{equation*}
  \par
  We show that there exists an $\alpha$-ordering $(v_1, v_2, v_3, v_4, v_5)$ of $V$
  with respect to $f$ such that $\{v_4, v_5\} = \{p_4, p_5\}$.
  First, for each $v \in V$, we have
  \begin{equation*}
    f(\{p_1\}) = 10,\quad
    f(\{p_2\}) = 7,\quad
    f(\{p_3\}) = 6,\quad
    f(\{p_4\}) = 4,\quad
    f(\{p_5\}) = 3.
  \end{equation*}
  Since $\alpha < -2$, we obtain $\arg\min_{v\in V}(1 + \alpha)f(\{v\}) = \{p_1\}$, which implies $v_1 = p_1$.
  Next, for each $v \in V \setminus \{p_1\}$, we have
  \begin{align*}
    f(\{p_1, p_2\}) + \alpha f(\{p_2\}) &= 3 + 7\alpha, \\
    f(\{p_1, p_3\}) + \alpha f(\{p_3\}) &= 14 + 6\alpha, \\
    f(\{p_1, p_4\}) + \alpha f(\{p_4\}) &= 14 + 4\alpha, \\
    f(\{p_1, p_5\}) + \alpha f(\{p_5\}) &= 9 + 3\alpha.
  \end{align*}
  Since $\alpha < 0$, we obtain
  \begin{align*}
    \left(14 + 6\alpha\right) - \left(3 + 7\alpha\right)
    &= 11 - \alpha
    > 0, \\
    \left(14 + 4\alpha\right) - \left(3 + 7\alpha\right)
    &= 11 - 3\alpha
    > 0, \\
    \left(9 + 3\alpha\right) - \left(3 + 7\alpha\right)
    &= 6 - 4\alpha
    > 0. \\
  \end{align*}
  Therefore, $\arg\min_{v \in V \setminus \{p_1\}} \left(f(\{p_1, v\}) + \alpha f(\{v\})\right) = \{p_2\}$, which implies $v_2 = p_2$.
  Moreover, for each $v \in V \setminus \{p_1, p_2\}$, we have
  \begin{align*}
    f(\{p_1, p_2, p_3\}) + \alpha f(\{p_3\}) &= 7 + 6\alpha, \\
    f(\{p_1, p_2, p_4\}) + \alpha f(\{p_4\}) &= 7 + 4\alpha, \\
    f(\{p_1, p_2, p_5\}) + \alpha f(\{p_5\}) &= 2 + 3\alpha.
  \end{align*}
  Since $\alpha < -2$, we obtain
  \begin{align*}
    \left(7 + 4\alpha\right) - \left(7 + 6\alpha\right)
    &= -2\alpha
    > 0, \\
    \left(2 + 3\alpha\right) - \left(7 + 6\alpha\right)
    &= -5 - 3\alpha
    > 0. 
  \end{align*}
  Therefore, we have $\arg\min_{v \in V \setminus \{p_1, p_2\}} \left(f(\{p_1, p_2, v\}) + \alpha f(\{v\})\right) = \{p_3\}$, which implies $v_3 = p_3$.
  It follows that $\{v_4, v_5\} = \{p_4, p_5\}$. \par
  We now consider $X = \{p_3, p_4\}$, which satisfies $|X \cap \{p_4, p_5\}| = 1$. On the other hand,
  \begin{equation*}
    f(X) = 2 < 3 = f(\{p_5\}) = \min_{v \in V} f(\{v\}).
  \end{equation*}
  Therefore, $(p_4, p_5)$ is not a contractible pair.
\end{proof}

We now investigate the ranges of $\alpha$ for pendent and flat pairs. 

\begin{proposition}\label{prop:requirement-for-pendent}
  For any $\alpha \neq -1$, there exists a symmetric submodular function
  $f : 2^V \to \mathbb{R}$ and an $\alpha$-ordering $(v_1, \dots, v_n)$
  of $V$ with respect to $f$ such that $(v_{n - 1}, v_n)$ is not a pendent pair.
\end{proposition}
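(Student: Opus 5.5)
Since a pendent pair is always contractible (if $f(X)\ge f(\{v\})$ for every separating $X$, then in particular $f(X)\ge\min_{x\in V}f(\{x\})$), Proposition~\ref{prop:requirement-for-contractible} already settles every $\alpha\notin[-1,1]$. The examples constructed there have last pairs that are not contractible, hence not pendent. It therefore remains to handle $\alpha\in(-1,1]$. As in the previous proof, I would take $f$ to be the cut function $\kappa$ of a small capacitated graph and exhibit an $\alpha$-ordering whose last pair $(v_{n-1},v_n)$ is separated by some $X$ with $f(X)<f(\{v_n\})$.

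First I would rule out three vertices, to see what the construction needs. On a triangle with capacities $c(a,b)=x$, $c(b,c)=y$, $c(a,c)=z$ and $v_1=a$, the comparison at step two gives
\[
\bigl[f(\{a,b\})+\alpha f(\{b\})\bigr]-\bigl[f(\{a,c\})+\alpha f(\{c\})\bigr]=(z-x)(1-\alpha).
\]
Choosing $v_2=b$ thus forces $z\le x$. Non-pendency of $(b,c)$ requires $f(\{b\})<f(\{c\})$, i.e.\ $x<z$. These conflict, except for ties at $\alpha=1$. So I would look for a $4$-vertex example, possibly with capacities depending on $\alpha$ as in Case~1 of Proposition~\ref{prop:requirement-for-contractible}, where $\kappa_\alpha$ was tuned to create a tie.

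The intended shape of the example is as follows. The last vertex $v_n$ has large degree, say from a heavy edge to $v_1$. The second-to-last vertex $v_{n-1}$ is attached to $v_n$ only weakly, so that a set $X$ containing $v_{n-1}$ (and perhaps one light vertex) but not $v_n$ has small cut value. The heavy vertex must nevertheless be postponed to the end. This works because for $\alpha>-1$ the term $\alpha f(\{v\})$ penalizes large-degree vertices less strongly than in the MA case, while adding a heavy neighbour of $V_{i-1}$ does not decrease $f(V_{i-1}\cup\{v\})$ enough to compensate. Concretely, I would parametrize a $4$-vertex graph (for instance a path or a star plus one edge) with capacities $1$, $\varepsilon$ and $M$. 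I would compute the singleton values and the quantities $f(V_{i-1}\cup\{v\})+\alpha f(\{v\})$ at each step. Then I would choose the parameters so that the minimum at each step is attained, possibly with ties resolved in our favour as the paper does, by the intended vertex for every $\alpha\in(-1,1]$. If one uniform family fails, I would split into two cases: $\alpha\in(-1,1)$, and $\alpha=1$, the MD-ordering.

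The main obstacle is this simultaneous satisfaction of the ordering inequalities across the whole interval $(-1,1]$. Near $\alpha=-1$ the ordering behaves like MA-ordering, which always yields pendent pairs (Theorem~\ref{thm:ma-order-cut}), so the margins necessarily shrink as $\alpha\to-1$. I therefore expect to need capacities depending on $\alpha$, such as $\kappa_\alpha=\frac{c}{1+\alpha}$, chosen so that a crucial comparison becomes an exact tie. The final check is then a direct computation showing $f(X)<f(\{v_n\})$ for the chosen $X$ with $|X\cap\{v_{n-1},v_n\}|=1$.
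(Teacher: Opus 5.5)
Your reduction of the case $\alpha\notin[-1,1]$ to Proposition~\ref{prop:requirement-for-contractible} is correct, since a pendent pair is contractible. This is exactly what the paper does, and your triangle computation is also right. The gap is in the case $\alpha\in(-1,1]$, which is the only case with new content. The statement is existential, so the proof \emph{is} the example. You never produce one; you only describe how you would search for it.

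Moreover, the shape you sketch would not survive that search. For a cut function, $f(W\cup\{y\})+\alpha f(\{y\})=f(W)+(1+\alpha)\deg(y)-2\,\adj(W,y)$. An edge of capacity $c$ from $y$ into the already chosen set $W$ contributes $-(1-\alpha)c\le 0$ to the score of $y$. An edge from $y$ to an unchosen vertex contributes $(1+\alpha)c\ge 0$.

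A heavy edge $v_1v_n$ therefore makes $v_n$ \emph{more} attractive, which is the opposite of your claim that such a neighbour ``does not decrease $f(V_{i-1}\cup\{v\})$ enough to compensate.'' Two things go wrong:
\begin{itemize}
\item For $\alpha>-1$ the first element must minimize $f(\{v\})$, so a vertex incident to a heavy edge cannot be $v_1$ at all.
\item Even if it were $v_1$, for $\alpha<1$ and large $M$ the vertex $v_n$ would be picked at step $2$.
\end{itemize}

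To postpone the last two vertices, the heavy edge should join \emph{them}. This is the paper's example. Take $V=\{p_1,\dots,p_4\}$ with $c(p_1,p_4)=c(p_2,p_3)=1$ and $c(p_3,p_4)=\kappa_\alpha=2/(1+\alpha)$.
\begin{itemize}
\item Step $1$: $p_1$ minimizes $(1+\alpha)f(\{v\})$.
\item Step $2$: $p_2$ and $p_4$ tie at $2+\alpha$, while $p_3$ scores $4+\alpha$. This is the exact tie your guess $\kappa_\alpha=c/(1+\alpha)$ anticipated.
\item Choosing $p_2$ leaves $\{p_3,p_4\}$ as the last pair.
\end{itemize}
Then $X=\{p_1,p_4\}$ cuts only the edge $p_3p_4$, so $f(X)=\kappa_\alpha<1+\kappa_\alpha=f(\{p_3\})=f(\{p_4\})$. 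Hence $(v_3,v_4)$ is not pendent, whichever of $p_3,p_4$ comes last.
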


\begin{proof}
  Since every pendent pair is contractible, Proposition~\ref{prop:requirement-for-contractible} implies that for any $\alpha \notin [-1, 1]$, there exists
  a symmetric submodular function $f : 2^V \to \mathbb{R}$ and an
  $\alpha$-ordering $(v_1, \dots, v_n)$ of $V$ such that $(v_{n - 1}, v_n)$ is not
  a pendent pair. Therefore, it suffices to consider the case $\alpha \in (-1, 1]$.
  Consider $V = \{p_1, p_2, p_3, p_4\}$ and $E = \left\{\{p_1, p_4\}, \{p_2, p_3\}, \{p_3, p_4\}\right\}$, and 
  define the edge capacity function $c: E \to \R_+$ by
  \begin{equation*}
    c(p_1, p_4) = c(p_2, p_3) = 1,\quad
    c(p_3, p_4) = \kappa_\alpha,
  \end{equation*}
  where $\kappa_\alpha = \frac{2}{1 + \alpha}$. Since $\alpha \in (-1, 1]$, we have $\kappa_\alpha > 0$. \par
  By the same argument as in Case~1 of the proof of
  Proposition~\ref{prop:requirement-for-contractible}, there exists an
  $\alpha$-ordering $(v_1, v_2, v_3, v_4)$ of $V$ with respect to $f$
  such that $v_1 = p_1$, $v_2 = p_2$, and hence $\{v_3, v_4\} = \{p_3, p_4\}$.
  We now consider $X = \{p_1, p_4\}$, which satisfies $|X \cap \{p_3, p_4\}| = 1$.
  Moreover, since $\kappa_\alpha > 0$, we have 
  \begin{equation*}
    f(X) = \kappa_\alpha < 1 + \kappa_\alpha = f(\{p_3\}) = f(\{p_4\}).
  \end{equation*}
  Hence, regardless of whether $v_4 = p_3$ or $v_4 = p_4$, we have $f(X) < f(\{v_4\})$.
  Therefore, $(p_3, p_4)$ is not a pendent pair.
\end{proof}

\begin{proposition}\label{prop:requirement-for-flat}
  For any $\alpha \neq 1$, there exists a symmetric submodular function
  $f : 2^V \to \mathbb{R}$ and an $\alpha$-ordering $(v_1, \dots, v_n)$
  of $V$ with respect to $f$ such that $(v_{n - 1}, v_n)$ is not a flat pair.
\end{proposition}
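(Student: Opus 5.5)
The first step is to reduce to the range $\alpha\in[-1,1)$. Every flat pair is contractible: if $f(X)\ge\min_{x\in X}f(\{x\})$, then certainly $f(X)\ge\min_{x\in V}f(\{x\})$. Hence Proposition~\ref{prop:requirement-for-contractible} already provides, for every $\alpha\notin[-1,1]$, a symmetric submodular $f$ and an $\alpha$-ordering whose last two elements are not contractible, and therefore not flat. This mirrors the opening of the proof of Proposition~\ref{prop:requirement-for-pendent}. For the remaining range $-1\le\alpha<1$ I will take $f$ to be the cut function $\kappa$ of a small capacitated graph. One capacity $\kappa_\alpha$ will depend on $\alpha$, and the natural guess is $\kappa_\alpha$ proportional to $1/(1-\alpha)$, by analogy with the choice $2/(1+\alpha)$ used above.

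The target configuration is a set $X$ that contains exactly one of $v_{n-1},v_n$ and is loosely attached to the rest of the graph, while every $x\in X$ has large degree because of heavy edges inside $X$. Then $\kappa(X)<\min_{x\in X}\kappa(\{x\})$ holds, so $(v_{n-1},v_n)$ is not flat. The ordering must avoid absorbing $X$ early. For $\alpha<1$ the selection rule minimizes $f(V_{i-1}\cup\{y\})+\alpha f(\{y\})$. Compared with the MD-ordering ($\alpha=1$), this rewards vertices of high degree less, so I expect the $\alpha$-ordering to start on low-degree vertices outside $X$. Adjusting $\kappa_\alpha$ should then make the next comparison a tie, so that one vertex of $X$ is left among the last two.

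I would first check small cases and aim for $n=4$ or $n=5$. A quick check shows that $n=3$ cannot work for $\alpha>-1$: either $v_1$ is forced to be the minimum-degree vertex, or the required strict inequality contradicts it. The $4$-vertex graph of Case~1 also fails for $\alpha<1$: violating flatness there needs $\kappa_\alpha<1$, while making $v_2=p_2$ legal needs $\kappa_\alpha\ge 2/(1+\alpha)$, and both hold only if $\alpha>1$. So I would try a $5$-vertex graph in the spirit of Cases~2 and~3. It would consist of a heavy edge $\{p_3,p_4\}$ forming $X$, a light connection of total capacity $2$ from $X$ to $\{p_1,p_2,p_5\}$, and pendant-style edges outside $X$ weighted by $\kappa_\alpha$. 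The ordering checks then follow the pattern of Case~2: write down the values $f(V_{i-1}\cup\{y\})+\alpha f(\{y\})$ for each step $i=1,2,3$, show the intended vertex is an argmin (with ties allowed, as in the earlier cases), and conclude that exactly one of $p_3,p_4$ lies in $\{v_4,v_5\}$.

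The main obstacle is choosing the capacities so that a single construction works uniformly for all $\alpha\in[-1,1)$. The inequalities degenerate as $\alpha\to1$, where the statement must fail, so $\kappa_\alpha\to\infty$ there is expected. Near $\alpha=-1$ the first selection step, which minimizes $(1+\alpha)f(\{v\})$, becomes vacuous. If one family does not cover the whole range, I would split into subcases such as $\alpha=-1$, $-1<\alpha\le0$ and $0<\alpha<1$, as in Proposition~\ref{prop:requirement-for-contractible}. In each subcase I would use a fixed numerical graph where possible and a $\kappa_\alpha$-dependent one near $\alpha=1$.
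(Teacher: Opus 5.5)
Your reduction to $\alpha\in[-1,1)$ is correct and matches the paper: a flat pair is contractible, so Proposition~\ref{prop:requirement-for-contractible} covers $\alpha\notin[-1,1]$. The genuine gap is that for $\alpha\in[-1,1)$, the only nontrivial part of the proposition, you never exhibit a function, an $\alpha$-ordering and a set $X$, and you check no selection step. What you give is a search plan, not a proof.

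Your exploration also goes wrong at a specific point. When you dismiss the Case~1 graph, you check only the tie-break $v_2=p_2$ with $X=\{p_1,p_4\}$. The other tie-break already works. Take $c(p_1,p_4)=c(p_2,p_3)=1$, $c(p_3,p_4)=\kappa_\alpha$, and $v_1=p_1$, which is a minimum-degree vertex.
\begin{itemize}
\item Then $v_2=p_4$ is admissible iff $\kappa_\alpha+\alpha(1+\kappa_\alpha)\le 2+\alpha$, i.e.\ $(1+\alpha)\kappa_\alpha\le 2$.
\item This leaves $\{v_3,v_4\}=\{p_2,p_3\}$.
\item $X=\{p_3,p_4\}$ then has $f(X)=2<1+\kappa_\alpha=\min_{x\in X}f(\{x\})$ as soon as $\kappa_\alpha>1$.
\end{itemize}
So any $\kappa_\alpha\in(1,2/(1+\alpha)]$ works, including the same $\kappa_\alpha=2/(1+\alpha)$ as in Case~1. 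For $\alpha=-1$, any $\kappa_\alpha>1$ works.

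Your heuristics point the other way: you guess $\kappa_\alpha\propto 1/(1-\alpha)\to\infty$ and put the $\alpha$-dependent weight on edges outside $X$. In the constructions that actually work, the $\alpha$-dependent capacity is the heavy edge inside $X$. It lies in a bounded window that closes as $\alpha\to 1$.

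The paper's proof uses the same mechanism with one extra pendant vertex. It takes the path $p_1p_2p_3p_4p_5$ with $c(p_1,p_2)=c(p_2,p_3)=c(p_4,p_5)=1$ and $c(p_3,p_4)=\kappa_\alpha\in(1,2/(1+\alpha)]$, or any $\kappa_\alpha>1$ if $\alpha=-1$.
\begin{itemize}
\item The degrees are $1,2,1+\kappa_\alpha,1+\kappa_\alpha,1$, so $v_1=p_1$ is admissible.
\item At step~2, $p_2$ has value $1+2\alpha$. Since $\alpha<1$, this is strictly below $2+\alpha$ (for $p_5$) and below $2+\kappa_\alpha+\alpha(1+\kappa_\alpha)$ (for $p_3,p_4$).
\item At step~3, $p_3$ has value $\kappa_\alpha+\alpha(1+\kappa_\alpha)$. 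This is $2$ less than the value for $p_4$, and at most $2+\alpha$ (for $p_5$) exactly when $(1+\alpha)\kappa_\alpha\le 2$.
\end{itemize}
Hence $\{v_4,v_5\}=\{p_4,p_5\}$, and $X=\{p_3,p_4\}$ gives $f(X)=2<1+\kappa_\alpha=\min_{x\in X}f(\{x\})$.

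Either version is one uniform family over $[-1,1)$. The only special case is the trivial treatment of $\alpha=-1$, so no split into subcases and no blow-up of $\kappa_\alpha$ is needed. Supplying one such instance, with its selection steps verified, is what your proposal is missing.
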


\begin{proof}
  Since every flat pair is contractible, Proposition \ref{prop:requirement-for-contractible} implies that for any $\alpha \notin [-1, 1]$, there exists
  a symmetric submodular function $f : 2^V \to \mathbb{R}$ and an
  $\alpha$-ordering $(v_1, \dots, v_n)$ of $V$ such that $(v_{n - 1}, v_n)$ is not
  a flat pair. Therefore, it suffices to consider the case $\alpha \in [-1, 1)$.
  Consider $V = \{p_1, p_2, p_3, p_4, p_5\}$ and 
  $E = \left\{\{p_1, p_2\}, \{p_2, p_3\}, \{p_3, p_4\}, \{p_4, p_5\}\right\}$, and define the edge capacity function $c : E \to \mathbb{R}_+$ by
  \begin{equation*}
    c(p_1, p_2) = c(p_2, p_3) = c(p_4, p_5) = 1,\quad
    c(p_3, p_4) = \kappa_\alpha,
  \end{equation*}
  where
  \begin{equation*}
    \kappa_\alpha
    \in \left(1, \frac{2}{1 + \alpha}\right]
    \quad \text{if } \alpha \in (-1, 1),
  \end{equation*}
  and $\kappa_\alpha > 1$ if $\alpha = -1$.\par
  We show that there exists an $\alpha$-ordering $(v_1, v_2, v_3, v_4, v_5)$ of $V$
  with respect to $f$ such that $\{v_4, v_5\} = \{p_4, p_5\}$.
  First, for each $v \in V$, we have
  \begin{equation*}
    f(\{p_1\}) = 1,\quad
    f(\{p_2\}) = 2,\quad
    f(\{p_3\}) = 1 + \kappa_\alpha,\quad
    f(\{p_4\}) = 1 + \kappa_\alpha,\quad
    f(\{p_5\}) = 1.
  \end{equation*}
  If $\alpha \in (-1, 1)$, then since $\kappa_\alpha > 1$, we have 
  $\arg\min_{v\in V}(1 + \alpha)f(\{v\}) = \{p_1, p_5\}$. 
  Thus, we can assume that $v_1 = p_1$.
  If $\alpha = -1$, then since $1 + \alpha = 0$, so we can also assume that $v_1 = p_1$.
  Next, for each $v \in V \setminus \{p_1\}$, we have
  \begin{align*}
    f(\{p_1, p_2\}) + \alpha f(\{p_2\}) &= 1 + 2\alpha, \\
    f(\{p_1, p_3\}) + \alpha f(\{p_3\}) &= 2 + \kappa_\alpha + \alpha(1 + \kappa_\alpha), \\
    f(\{p_1, p_4\}) + \alpha f(\{p_4\}) &= 2 + \kappa_\alpha + \alpha(1 + \kappa_\alpha), \\
    f(\{p_1, p_5\}) + \alpha f(\{p_5\}) &= 2 + \alpha.
  \end{align*}
  Since $\alpha \in [-1, 1)$ and $\kappa_\alpha > 1$, we obtain
  \begin{align*}
    \left(2 + \kappa_\alpha + \alpha(1 + \kappa_\alpha)\right) - \left(1 + 2\alpha\right)
    &= 1 - \alpha + (1 + \alpha)\kappa_\alpha
    > 0, \\
    \left(2 + \alpha\right) - \left(1 + 2\alpha\right)
    &= 1 - \alpha
    > 0. 
  \end{align*}
  Therefore, we have $\arg\min_{v \in V \setminus \{p_1\}} \left(f(\{p_1, v\}) + \alpha f(\{v\})\right) = \{p_2\}$, and hence $v_2 = p_2$.
  Moreover, for each $v \in V \setminus \{p_1, p_2\}$, we have
  \begin{align*}
    f(\{p_1, p_2, p_3\}) + \alpha f(\{p_3\}) &= \kappa_\alpha + \alpha(1 + \kappa_\alpha), \\
    f(\{p_1, p_2, p_4\}) + \alpha f(\{p_4\}) &= 2 + \kappa_\alpha + \alpha(1 + \kappa_\alpha), \\
    f(\{p_1, p_2, p_5\}) + \alpha f(\{p_5\}) &= 2 + \alpha.
  \end{align*}
  Since $\alpha \in [-1, 1)$ and $\kappa_\alpha > 1$, we have
  \begin{equation*}
    \left(2 + \kappa_\alpha + \alpha(1 + \kappa_\alpha)\right) - \left(\kappa_\alpha + \alpha(1 + \kappa_\alpha)\right)
    = 2
    > 0.
  \end{equation*}
  If $\alpha \in (-1, 1)$, then since $\kappa_\alpha \leq \frac{2}{1 + \alpha}$, we obtain
  \begin{equation*}
    \left(2 + \alpha\right) - \left(\kappa_\alpha + \alpha(1 + \kappa_\alpha)\right)
    = 2 - (1 + \alpha) \kappa_\alpha
    \geq 2 - \frac{2(1 + \alpha)}{1 + \alpha}
    = 0.
  \end{equation*}
  If $\alpha = -1$, then
  \begin{equation*}
    \left(2 + \alpha\right) - \left(\kappa_\alpha + \alpha(1 + \kappa_\alpha)\right)
    = 2 - (1 + \alpha) \kappa_\alpha 
    = 2
    > 0.
  \end{equation*}
  Therefore, we have $\arg\min_{v \in V \setminus \{p_1, p_2\}} \left(f(\{p_1, p_2, v\}) + \alpha f(\{v\})\right) = \{p_3\}$, and hence $v_3 = p_3$.
  It follows that $\{v_4, v_5\} = \{p_4, p_5\}$. \par
  We now consider $X = \{p_3, p_4\}$, which satisfies $|X \cap \{p_4, p_5\}| = 1$.
  Moreover, since $\kappa_\alpha > 1$, we have 
  \begin{equation*}
    f(X) = 2 < 1 + \kappa_\alpha = f(\{p_3\}) = \min_{x \in X} f(\{x\}).
  \end{equation*}
  Therefore, $(p_4, p_5)$ is not a flat pair.
\end{proof}

\bibliographystyle{plain}
\bibliography{references}

\end{document}

%% file: _config.tex
\usepackage[margin=1in]{geometry}
\usepackage{amsmath,amssymb,amsthm}
\usepackage[hidelinks]{hyperref}
\usepackage{theoremref}
\usepackage{mathtools}
\usepackage{algorithm}
\usepackage{algpseudocode}
\usepackage{subcaption}
\usepackage{tikz}
\usepackage{multirow}
\usepackage{booktabs}
\usepackage{adjustbox}
\usepackage{physics}
\usepackage{float}
\usepackage{enumitem}

\newtheorem{theorem}{Theorem}
\newtheorem{lemma}{Lemma}

\newtheorem{corollary}{Corollary}
\newtheorem{proposition}{Proposition}
\newtheorem{definition}{Definition}

\usetikzlibrary{arrows}
\usetikzlibrary{shapes.geometric}
\usetikzlibrary{positioning}
\usetikzlibrary{intersections, calc, arrows}

\makeatletter
\newcommand{\figcaption}[1]{\def\@captype{figure}\caption{#1}}
\newcommand{\tblcaption}[1]{\def\@captype{table}\caption{#1}}
\makeatother

\let\oldComment\Comment
\renewcommand{\Comment}[1]{\oldComment{\textcolor{gray}{#1}}}